\newcommand{\E}{{E}}
\newcommand{\R}{\mathbb{R}}
\DeclareMathOperator*{\argmax}{arg\,max}
\DeclareMathOperator*{\argmin}{arg\,min}
\DeclareMathOperator*{\essinf}{ess\,inf}
\DeclareMathOperator*{\esssup}{ess\,sup}
\title{ A risk measurement approach from risk-averse stochastic optimization of score functions}
\author{Marcelo Brutti Righi$^{a,}$\footnote{Corresponding author.}\\  \small{ \href{mailto:marcelo.righi@ufrgs.br}{marcelo.righi@ufrgs.br}} \and  Fernanda Maria M\"{u}ller$^{a}$\\\small{\href{mailto:fernanda.muller@ufrgs.br}{fernanda.muller@ufrgs.br}}
\and Marlon Ruoso Moresco$^{b}$\\\small{\href{mailto:marlon.moresco@concordia.ca}{marlon.moresco@ufrgs.br}}}
\date{\small{$^{a}$\textit{Business School, Federal University of Rio Grande do Sul, Washington Luiz, 855, Porto Alegre, Brazil, zip 90010-460}\\
$^{b}$\textit{Department of mathematics and statistics, Concordia University, Boul. de Maisonneuve Ouest, 1455, Montréal, QC, Canada, zip H3G 1M8}}}
\newtheorem{Def}{Definition}
\newtheorem{Thm}{Theorem}
\newtheorem{Prp}{Proposition}
\theoremstyle{definition}
\newtheorem{Exm}{Example}
\theoremstyle{remark}
\newtheorem{Rmk}{Remark}
\begin{document}
	
	\maketitle
	
	\begin{abstract}
We propose a risk measurement approach for a risk-averse stochastic problem.
We provide results that guarantee that our problem has a solution. We characterize and explore the properties of the argmin as a risk measure and the minimum as a deviation measure. 
We provide a connection between linear regression models and our framework.
Based on this conception, we consider conditional risk and provide a connection between the minimum deviation portfolio and linear regression. 
 Moreover, we also link the optimal replication hedging to our framework.
 
		\textbf{Keywords}:  Risk management; Uncertainty modeling; risk measures; deviation measures; robust stochastic programming.
	\end{abstract}
	
\section{Introduction}\label{introduction}

The theoretical discussion of risk measures  gained prominence since the seminal work of \cite{Artzner1999}, who developed the class of coherent risk measures.  
From there, other proprieties and classes of risk measures were proposed, including the convex \citep{Follmer2002, Frittelli2002},  spectral \citep{Acerbi2002}, and generalized deviation measures  \citep{rockafellar2006generalized}. From that, an entire stream of literature has proposed and discussed distinct features for risk measures, including axiom sets, dual representations, and mathematical  properties. For detailed reviews, we recommended the books of \cite{Pflug2007}, \cite{Delbaen2012}, \cite{Ruschendor2013}, and \cite{Follmer2016} and the studies  of  \cite{Follmer2013} and \cite{Follmer2015}.

Recently, the discussion of statistical properties that a risk measure must respect has also gained space in the literature that discusses characteristics for risk measures.
A prominent statistical property is elicitability. This property is very useful for risk management because it enables comparing competing forecast models using the scoring rule.  Examples of elicitable functionals are quantiles and expectiles, which makes Value at Risk (VaR) and Expectile Value at Risk (EVaR) elicitable risk measures. 
 We recommended \cite{gneiting2011making}, \cite{Bellini2015}, \cite{Ziegel2016},  \cite{Kou2016}, \cite{fissler2016higher}, \cite{fissler2021}, and the references therein for more details. 
 A functional $T$ on a vector space of random variables as $X$ is elicitable if exists a scoring function   $S\colon\mathbb{R}^{2}\rightarrow\mathbb{R}_+$ such that
	\begin{equation*}
	T(X)=-\argmin\limits_{y\in\mathbb{R}}E\left[S(X,y) \right].
	\end{equation*}
	We present more details in  \Cref{def:eli} below, and \Cref{Examples} describe some examples of $S$.

Inspired on the elicitability reasoning, even without keeping its technical definition, we consider a robust/risk-averse counterpart to the optimization problem as {\[\inf_{y\in \R} \sup\limits_{\mathbb{Q}\in\mathcal{Q}^\prime} E_\mathbb{Q}[S(X,y) ],\]}where $\mathcal{Q}^\prime$ is a suitable set of probability measures, which may represent beliefs or scenarios. Regarding the ambiguity set, one can choose $\mathcal{Q}^\prime\subseteq\mathcal{P}$ 
		in an ad hoc sense according to some \textit{a priori} established risk aversion parameter. Another possibility is to consider those probability measures representing beliefs absolutely continuous inside some distance from a nominal measure $\mathbb{P}$, as in \cite{Shapiro2017}. We consider $\mathcal{Q}^\prime$ linked to dual representations of coherent risk measures (sub-linear expectations as in \cite{Sun2017}). We consider risk measures coherent in the sense of \cite{Artzner1999} because these maps have a dual representation as the supremum of expectations over a closed (in total variation norm) convex set of probability measures. Thus, with coherent risk measures $\rho$ replacing the expectation, we define our risk measurement approach as a risk-averse stochastic problem as
\[\inf_{y\in \R}  \rho\left(-S(X,y) \right). \]

A possible, but not limited to, interesting direct application of this kind of risk measurement process could seek to minimize capital
determination errors to reduce the costs linked to it. As pointed out in \cite{Righi2020}, from the
regulatory point of view, risk underestimation, and consequently
capital determination underestimation, is the main concern. In
this case, capital charges are desirable to avoid costs from unexpected and uncovered losses. However, from the perspective of
institutions, it is also desirable to reduce the regret costs arising
from risk overestimation because the latter reduces profitability.

We provide results that guarantee that our risk measurement approach has a solution. We characterize the argmin as a risk measure per se, and the minimum as a deviation measure in the sense of \cite{rockafellar2006generalized}. We also explore the main proprieties of both functionals. 
Our proposal is inspired by the study of \cite{Righi2020}. 
The authors  propose a risk measurement procedure that represents the capital determination for a financial position that minimizes the expected value of the sum between costs from risk overestimation and underestimation and considers a supremum of probability measures to the expectation. However, they explore a single score instead general ones. A similar idea has been pursued in \cite{Mao2018}, where the expectation has been
replaced by functionals arising from rank-dependent expected utility and cumulative prospect theory. In this way, both studies of can be thought of as special cases in our framework.

The paper of \cite{Rockafellar2013} also relates to risk and deviation measures linked by a common optimization problem, and 
 \cite{Bellini2014}  study generalized quantiles as risk measures by minimizing asymmetric loss functions.
Unlike we do in this current approach, both mentioned studies do not consider the supremum of probability measures to the expectation. 
Thus, our approach can be considered robust since it is not sensitive to choosing a specific probability measure representing a particular belief about the world. In this sense, our approach is in concordance with the stream of \cite{Shapiro2017}, \cite{Bellini2019}, \cite{Righi2018b}, \cite{Righi2020}, for instance. 
In line with our study,  \cite{Embrechts2021} introduce the notions of Bayes pairs and Bayes risk measures as the counterpart of elicitable risk measures as the minimum of the scores. Nonetheless, their minimum scores are also risk measures instead of deviations.

We also make a connection between our framework and linear regression analysis. The most common functional form of regression analysis is linear regression, widely known through the method of ordinary least squares that minimizes the sum of squared differences. Other forms of regression use slightly different scores to estimate parameters, such as the quantile regression, see \cite{Koenker1978}, \cite{koenker2005quantile}, expectile regression, see \cite{Newey1987}, and extremile regression, see \cite{Daouia2019} and \cite{Daouia2021}, for instance.
The link between linear regression models and our risk measurement approach allows us to have conditional versions of both risk and deviation.
We explore the proprieties of conditional risk and prove that the minimizer is unique.
Discussions of conditional versions of risk are not new in the literature. 
However, the focus has been on score functions related mainly to quantile regression, i.e., VaR regressions. \cite{guillen2021joint} point that this approach is extremely useful for identifying covariates that influence the worst-case outcomes. 
We extend this discussion to different score functions. \cite{Wu2023} explores, as a counterpart to the generalized quantiles studied in \cite{Bellini2014}, conditional generalized quantiles. They, contrary to us, do not consider a robust optimization approach.

The concept of deviation is present in finance since \cite{markowitz1952portfolio} with the standard deviation. Such concept is axiomatized and generalized for convex functionals in \cite{rockafellar2006generalized}, \cite{pflug2006subdifferential} and \cite{Grechuk2009}. The problem of minimizing the deviation of a portfolio and its implications are explored in \cite{Rockafellar2007}. Recently, \cite{Righi2016} and \cite{Righi2018d} consider both risk and deviation measures. Furthermore, representing the portfolio choice problem in terms of an estimation problem of a linear
regression model is well known. \cite{Britten1999} proposes a regression approach for the
tangency portfolio, and \cite{Kempf2006} as well as \cite{Fan2012} show that the
plug-in estimator for the GMVP (global minimum variance portfolio) weights can also be obtained by means of linear regression.
More recently, \cite{Li2015} provides a regression representation of the mean-variance portfolio. The approach in \cite{Frey2016}
differs from the regression representation mentioned above by  avoiding the choice for a n-th asset $X_n$ as a dependent variable. 
We provide a similar connection between minimum deviation problems with linear regression under the same score that generates the deviation. Our results guarantee that our minimum deviation portfolio optimization problem has a solution.

In a complete market model, any derivative is attainable and thus admits a perfect hedge. The cost of replication equals the price of the derivative, which is the expected discounted claim payoff under the unique equivalent martingale measure \citep{huang2013optimal}. However, completeness is only an idealization of a financial market. Relaxing the idealized assumption leads to incomplete market models, where financial products bear an intrinsic risk that cannot be hedged away completely, see \cite{carr2001pricing}, and \cite{balter2020pricing} for details. For hedging procedures proposed in the literature for expected utility maximization in the form of minimization of a score/loss function, typically quadratic or quantile one, see \cite{bessler2016analyzing}, \cite{halkos2019energy} and \cite{barigou2022insurance}, for instance.  We then provide a direct connection between optimal hedging strategies with linear regression under the same score that the hedge is taken. We explore results that guarantee that our problem has a solution. 


Regarding structure, the remainder of this paper divides in the
following contents:  \Cref{Preli} describes definitions and results concerning the existence of a solution to our risk measurement approach problem, and explores the properties of our risk and deviation measures.  \Cref{Examples} exposes, in more detail,  examples of possible choices for $\rho$ and $S$.  \Cref{CondRisk} connects our approach to linear regression models, allowing conditional risk and its properties, besides solving minimum deviation portfolio optimization and optimal replication hedging problems. 

\section{Proposed approach} \label{Preli}

Consider the real-valued random result $X$ of any asset ($X\geq0$ is a gain, $X<0$ is a loss) that is defined on a probability space $(\Omega,\mathcal{F},\mathbb{P})$. All equalities and inequalities are considered almost surely in $\mathbb{P}$. We define $X^+=\max(X,0)$, $X^-=\max(-X,0)$, and $1_A$ as the indicator function for an event  $A$. Let $L^{p}:=L^{p}(\Omega,\mathcal{F},\mathbb{Q})$ the space of (equivalent classes of) random variables such that $ \lVert X \rVert_p^p = E[|X|^p]<\infty$ for $p\in[1,\infty)$ and $ \lVert X \rVert_\infty = \esssup |X| < \infty$ for $ p = \infty$, where $E$ is the expectation. When not explicit, it means that definitions and claims are valid for any fixed $L^p,\:p\in[1,\infty]$. We have that $L^p_{+}$ is its cone of non-negative elements. We denote by $X_n\rightarrow X$ convergence in the $L^p$ norm, while $\lim\limits_{n\rightarrow\infty}X_n=X$ means $\mathbb{P}$-a.s. convergence. 

We let $\mathcal{Q}$ denote the set composed of probability measures  $\mathbb{Q}$ defined on $(\Omega,\mathcal{F})$ that are absolutely continuous with respect to $\mathbb{P}$, with Radon-Nikodym derivative $\frac{d\mathbb{Q}}{d\mathbb{P}}\in L^q$, $\frac{1}{p}+\frac{1}{q}=1$, with the convention $q=\infty$ when $p=1$ and $q=1$ when $p=\infty$.
  Moreover, $E_{\mathbb{Q}}[X]=\int_{\Omega}Xd\mathbb{Q}$, $F_{X, \mathbb{Q}}(x)=\mathbb{Q}(X\leq x)$ and $F_{X, \mathbb{Q}}^{-1}(\alpha)=\inf\{x\in\mathbb{R}\colon F_{X, \mathbb{Q}}(x)\geq\alpha\}$ are, respectively, the expected value, the distribution function and the (left) quantile of $X$ under $\mathbb{Q}$. We drop the subscript  when it is regarding  $\mathbb{P}$.

We now formally define the framework we need to build our proposed approach. 

\begin{Def}\label{def:eli}
	A map $S:\mathbb{R}^{2}\rightarrow\mathbb{R}_+$ is called scoring function if the map $\omega\rightarrow S(X(\omega),Y(\omega))$  belongs to $L^1$ for any $X,Y\in L^p$, and satisfy the following properties for any $x,y\in\R$:
	
	\begin{enumerate}
		\item $S(x,y)\geq 0$ and $S(x,y)=0$ if and only if $x=y$.
		\item There is a function $f_S\colon\mathbb{R}\to\mathbb{R}$ such that $S(x,y)=f_S(x-y)$.
		\item $y\to S(x,y)$ is convex and continuous. 
	\end{enumerate}
	A function $T\colon L^p\rightarrow\mathbb{R}$ is elicitable if exists a scoring function $S$ such that
	\begin{equation}\label{eq:eli}
T(X)=-\argmin\limits_{y\in\mathbb{R}}E\left[S(X,y) \right],\:\forall\:X\in L^p.
	\end{equation}
\end{Def}

\begin{Rmk}
As a consequence of properties (i) and  (iii) we have that $y\to S(x,y)$ is
 non-decreasing for $y>x$ and non-increasing for $y<x$. Moreover, some more generality can be obtained. In fact, for most of the paper one could relax continuity of $y\to S(x,y)$ to only lower semi-continuity. Further, one can drop the demand for existence of a $f_S$ at the cost of dropping the Translation Invariance/Insensitivity (see below). It is straightforward to verify that there is the preservation of such properties if and only if there exists such a real $f_S$. 
\end{Rmk}

\begin{Rmk} 
We would like to highlight that the assumption on the scoring function $S$ implies some properties in the function $f_S$. In particular, we have that $f_S (x) =0$ if and only if $x = 0$, $f_S(x)$ is convex, continuous, non-decreasing for $x<0$ and non-increasing for $x>0$. This also implies that the map $x \mapsto S(x,y)$ has the same properties as $y\to S(x,y)$. Furthermore, note that when the necessary derivatives exist, we have that $\frac{\partial S(x,y)}{\partial x} = \frac{\partial f_S(x-y)}{\partial x} = f_S'(x-y)$ and $\frac{\partial S(x,y)}{\partial y} = \frac{\partial f_S(x-y)}{\partial y} = -f_S'(x-y)$.
\end{Rmk}

 A robust counterpart to this optimization problem, even without keeping the technical definition of elicitability, involves sets of probability measures obtained from coherent risk measures. Thus, we expose some definitions and results from the risk measures literature we use alongside the paper. We choose to consider only finite maps since it is the kind that fits our proposed approach.

\begin{Def}\label{def:risk}
	A functional $\rho:L^p\rightarrow\mathbb{R}$ is a risk measure. Its acceptance set is defined as $\mathcal{A}_\rho=\left\lbrace X\in L^p:\rho(X)\leq 0 \right\rbrace $. $\rho$ may possess the following properties: 
	
	\begin{enumerate}
		\item Monotonicity: if $X \leq Y$, then $\rho(X) \geq \rho(Y),\:\forall\: X,Y\in L^p$.
		\item Translation Invariance: $\rho(X+c)=\rho(X)-c,\:\forall\: X\in L^p,\:\forall\:c \in \mathbb{R}$.
		\item Convexity: $\rho(\lambda X+(1-\lambda)Y)\leq \lambda \rho(X)+(1-\lambda)\rho(Y),\:\forall\: X,Y\in L^p,\:\forall\:\lambda\in[0,1]$.
		\item Positive Homogeneity: $\rho(\lambda X)=\lambda \rho(X),\:\forall\: X\in L^p,\:\forall\:\lambda \geq 0$.

	\end{enumerate}
	
	We have that  $\rho$ is called monetary if it fulfills (i) and (ii), convex if it is monetary and respects (iii), and coherent if it is convex and fulfills (iv).  
\end{Def}

\begin{Thm}[Theorems 2.11 and 3.1 of \cite{Kaina2009}]\label{the:dual}
	A map $\rho : L^p\rightarrow \mathbb{R}$, $p\in[1,\infty)$, is a coherent risk measure if and only if it can be represented as:
	\begin{equation}\label{eq:cohdual}
	\rho(X)=\max\limits_{\mathbb{Q}\in\mathcal{Q}_\rho} E_\mathbb{Q}[-X],\:\forall\:X\in L^p,
	\end{equation} where $\mathcal{Q}_\rho\subseteq\mathcal{Q}$ is non-empty, closed, and convex set called the dual set of $\rho$. Moreover, $\rho$ is lower semi-continuous in the $L^p$ norm and continuous in the bounded $\mathbb{P}$-a.s. convergence (Lebesgue continuous).
\end{Thm}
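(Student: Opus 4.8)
The plan is to prove the two implications separately, treating sufficiency (``if'') by direct verification and necessity (``only if'') through convex duality. For the easy direction, suppose $\rho(X)=\max_{\mathbb{Q}\in\mathcal{Q}_\rho}E_{\mathbb{Q}}[-X]$ with $\mathcal{Q}_\rho\subseteq\mathcal{Q}$ nonempty, closed, and convex. Each functional $X\mapsto E_{\mathbb{Q}}[-X]$ is linear, decreasing in $X$, satisfies $E_{\mathbb{Q}}[-(X+c)]=E_{\mathbb{Q}}[-X]-c$ (as $\mathbb{Q}$ is a probability measure), and is positively homogeneous; a supremum preserves monotonicity, translation invariance, convexity, and positive homogeneity, so $\rho$ is coherent. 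Moreover each such functional is norm-continuous by H\"older (since $d\mathbb{Q}/d\mathbb{P}\in L^q$), so $\rho$, being a supremum of continuous affine maps, is lower semicontinuous in the $L^p$ norm.

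For necessity, the core tool is the Fenchel--Moreau theorem in the dual pairing $\langle L^p,L^q\rangle$, $\tfrac1p+\tfrac1q=1$, where $(L^p)^*=L^q$ for $p\in[1,\infty)$. First I would invoke that a finite coherent $\rho$ is automatically Lipschitz continuous on $L^p$ (the content of the cited Theorem 2.11): finiteness and convexity give local boundedness, which combined with monotonicity upgrades to local and then, by positive homogeneity, global Lipschitz continuity. In particular $\rho$ is proper, convex, and norm-lower-semicontinuous, hence lower semicontinuous for $\sigma(L^p,L^q)$, so Fenchel--Moreau yields
\[
\rho(X)=\sup_{Z\in L^q}\bigl\{E[ZX]-\rho^*(Z)\bigr\},\qquad \rho^*(Z)=\sup_{X\in L^p}\bigl\{E[ZX]-\rho(X)\bigr\}.
\]

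The next step is to identify the effective domain of $\rho^*$ with $\mathcal{Q}_\rho$ using the remaining axioms. Positive homogeneity (so $\rho(0)=0$ and the bracket scales linearly) forces $\rho^*$ to take only the values $0$ and $+\infty$, i.e. $\rho^*$ is the indicator of the closed convex set $\mathcal{C}=\{Z:\rho^*(Z)\le 0\}$. Monotonicity forces every $Z\in\mathcal{C}$ to satisfy $Z\le 0$, since otherwise pushing $X=t\,\mathbf{1}_{\{Z>0\}}$ with $t\to+\infty$ makes $E[ZX]-\rho(X)$ unbounded above; and translation invariance forces the normalization $E[-Z]=1$, as the coefficient of $c$ in $E[Z(X+c)]-\rho(X+c)=E[ZX]-\rho(X)+c(E[Z]+1)$ must vanish for finiteness. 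Writing $d\mathbb{Q}/d\mathbb{P}=-Z$ therefore identifies $\mathcal{C}$ with a nonempty, closed, convex set $\mathcal{Q}_\rho\subseteq\mathcal{Q}$, and since $E[ZX]=E_{\mathbb{Q}}[-X]$ the biconjugate becomes $\rho(X)=\sup_{\mathbb{Q}\in\mathcal{Q}_\rho}E_{\mathbb{Q}}[-X]$.

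It remains to promote the supremum to a maximum and to derive the two regularity statements. Attainment follows from norm-continuity: at each $X$ the subdifferential $\partial\rho(X)$ is nonempty, and any subgradient, after the sign/normalization identification above, lies in $\mathcal{Q}_\rho$ and achieves the supremum. Lipschitz continuity also makes $\mathcal{Q}_\rho$ norm-bounded in $L^q$, say $\lVert d\mathbb{Q}/d\mathbb{P}\rVert_q\le K$; lower semicontinuity in norm is then clear, and for Lebesgue continuity, if $X_n\to X$ $\mathbb{P}$-a.s.\ with $(X_n)$ uniformly bounded then $\lVert X_n-X\rVert_p\to 0$ by dominated convergence (using $p<\infty$), whence $|\rho(X_n)-\rho(X)|\le\sup_{\mathbb{Q}\in\mathcal{Q}_\rho}|E[(X-X_n)\,d\mathbb{Q}/d\mathbb{P}]|\le K\lVert X_n-X\rVert_p\to 0$ by H\"older. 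The main obstacle is the necessity direction, specifically the automatic-continuity lemma that legitimizes Fenchel--Moreau and the attainment claim; the latter is most delicate for $p=1$, where $q=\infty$ and $L^\infty$ is not reflexive, so weak compactness is unavailable and one must rely instead on the subdifferentiability furnished by norm-continuity.
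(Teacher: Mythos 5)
The paper never proves this statement: it is quoted, with attribution, as Theorems 2.11 and 3.1 of \cite{Kaina2009}, so there is no internal proof to compare against. Your argument is in structure a correct reconstruction of the standard proof from that literature: sufficiency by direct verification of the axioms, with lower semicontinuity coming from taking a supremum of norm-continuous linear functionals; necessity via automatic continuity of finite monotone convex maps on $L^p$, Fenchel--Moreau biconjugation in the pairing $\langle L^p,L^q\rangle$, identification of $\mathrm{dom}\,\rho^{*}$ through positive homogeneity (conjugate indicator-valued), monotonicity ($Z\le 0$), and translation invariance ($E[Z]=-1$); and promotion of the supremum to a maximum via nonemptiness of $\partial\rho(X)$ at points of norm-continuity. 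Your closing observation is also the right one: for $p=1$ the dual densities live in $L^\infty$, weak compactness is unavailable, and attainment must indeed come from subdifferentiability rather than from compactness of $\mathcal{Q}_\rho$.

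One caveat deserves flagging. Your gloss on the automatic-continuity step --- ``finiteness and convexity give local boundedness, which combined with monotonicity upgrades\dots'' --- has the logic backwards. On an infinite-dimensional normed space a finite convex function need not be locally bounded (discontinuous linear functionals are finite and convex); it is finiteness \emph{together with} monotonicity and convexity, exploiting the lattice structure of $L^p$ via a Baire-category/completeness argument, that yields local boundedness and hence local Lipschitz continuity (the extended Namioka--Klee theorem, which is the actual content of Theorem 2.11 in \cite{Kaina2009}). Since you invoke that theorem as a citation for the step anyway, the overall proof stands; but the parenthetical reasoning, taken as a self-contained argument, would not.
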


Thus, we can have risk measures replacing the expectation under an appropriate choice for the dual set. Hence, we define our risk measurement approach as a risk-averse stochastic problem as
\[\inf_{y\in \R}  \rho\left(-S(X,y) \right). \]
In order to guarantee risk-averseness of the problem in the sense of worst values for the objective function, we assume that $\mathbb{P}\in\mathcal{Q}_\rho$, which implies $\rho(X)\geq E[-X]$ for any $X\in L^p$. This property is known as Loadedness in the literature. This is always the case when $\rho$ is law invariant in the sense that $F_X=F_Y$ implies $\rho(X)=\rho(Y)$, see \cite{Follmer2016} for details.

We now formally define the functionals from our proposed risk measurement approach. The negative sign is to keep the pattern for losses. Moreover, in Propositions \ref{prp:sol2} and \ref{prp:sol} we prove that our both functionals are in fact finite and, thus, well defined. 

\begin{Def}\label{def:meas} Let $\rho\colon L^1\to\mathbb{R}$ be a coherent risk measure and $S\colon \mathbb{R}^2\to\mathbb{R}$ a score function. The risk and deviation measures they generate are functionals $R,D\colon L^p\to\mathbb{R}$, $p\in[1,\infty)$, respectively, defined as
	\begin{equation}
	R(X):=R_{\rho,S}(X)= -\min\left\lbrace\argmin\limits_{y\in\mathbb{R}} \rho\left( -S(X,y)\right) \right\rbrace \label{averserisk},
	\end{equation}
	and
	\begin{equation}
	D(X):=D_{\rho,S}(X)= \min\limits_{y\in\mathbb{R}} \rho\left( -S(X,y)\right)  \label{aversedev}.
	\end{equation}
Furthermore, for any $X \in L^p$ define the set of minimizers as $B_X:=\argmin\limits_{y\in\mathbb{R}} \rho\left( -S(X,y)\right) $.
\end{Def}

\begin{Rmk}
In this definition, we consider $L^1$ as the domain for $\rho$ to be a  more general and easy notation. Nonetheless, any $L^r,r\in[1,\infty)$ could be considered by adjusting the definition of score $S$ to fulfills $S(X,Y)\in L^r$ for any $X,Y\in L^p$. All results in this paper are directly adaptable to the $\rho\colon L^r\to\mathbb{R}$ if that would be the case. We just do not consider $L^\infty$ for the domain of $\rho$ since we want the supremum in its dual representation to be attained. In $L^\infty$, this would be the case under further continuity properties. See \cite{Follmer2016} for details. We do not want to consider parsimony. Similar reasoning applies to  both $R$ and $D$.
\end{Rmk}

\begin{Rmk}
	The task of assessing the performance of financial investments is central, with indexes such as the Sharpe ratio used to assess the trade-off between risk and return. In the last decade, performance has been analyzed through acceptability indexes since the seminal paper of \cite{Cherny2009}, which is extended in \cite{Gianin2013}. These authors discuss the properties such functionals must fulfill. Under our framework, we can have a reward to deviation ratio for acceptability as a map $RD\colon L^p\to [0,\infty]$ defined as 
		\begin{equation*}
	RD(X)=
		\begin{cases}
	\dfrac{-R(Y)}{D(Y)}&\text{if} \:R(X)<0\:\text{and}\:D(X)>0,\\
				\infty&\text{if}\:R(X)\leq 0 \:\text{and}\:D(X)=0,
				\\ 0&\text{otherwise}.		\end{cases}	
		\end{equation*} 
	Other adaptations and properties of this structure are discussed in and \cite{Righi2021b}.
\end{Rmk}

We now expose a formal result that guarantees our minimization problems have a solution. 

\begin{Prp}\label{prp:prob}
	Let $X\in L^p$ and $B_X$ defined as in \Cref{def:meas}. Then:
	\begin{enumerate}
		\item  $B_X$ is a closed interval.
		\item  $y\in B_X$ if and only if $y$ satisfies the first order condition given by 
		\begin{equation}
		E_{\mathbb{Q}^{*}}\left[\dfrac{\partial^{-}S(X,x)}{\partial x}(y) \right]\leq 0 \leq  	E_{\mathbb{Q}^{*}}\left[\dfrac{\partial^{+}S(X,x)}{\partial x}(y)\right],
		\end{equation} where $\mathbb{Q}^{*}=\argmax\limits_{\mathbb{Q}\in\mathcal{Q}_\rho} E_\mathbb{Q}[S(X,y)]$.
		\item if $y\to \rho(- S(X,y))$ is, for any $X\in L^p$, differentiable with strictly increasing derivative, then $B_X$ is a singleton. 
	\end{enumerate} 
\end{Prp}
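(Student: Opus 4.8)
The plan is to run everything through the convex analysis of the scalar function $g(y):=\rho(-S(X,y))$, which by \Cref{the:dual} equals $g(y)=\max_{\mathbb{Q}\in\mathcal{Q}_\rho}E_\mathbb{Q}[S(X,y)]$. For each fixed $x$ the map $y\mapsto S(x,y)$ is convex, so each $y\mapsto E_\mathbb{Q}[S(X,y)]$ is convex and their pointwise supremum $g$ is convex; moreover $g$ is finite, hence continuous on $\R$, because $S(X,y)\in L^1$ and $\rho$ is real valued. All three claims then reduce to statements about the minimizers of the finite convex function $g$.

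For (i) I would first secure that $B_X$ is nonempty and bounded via coercivity. The properties of $f_S$ recorded in the remarks force it to be a nonconstant convex map vanishing only at the origin, so $f_S(t)\to\infty$ as $|t|\to\infty$; since Loadedness gives $\mathbb{P}\in\mathcal{Q}_\rho$ and thus $g(y)\ge E[f_S(X-y)]$, Fatou's lemma yields $g(y)\to\infty$ as $|y|\to\infty$. A continuous coercive function attains its infimum, so $B_X\neq\emptyset$, and $B_X=\{y:g(y)=\min g\}$ is the sublevel set of a convex continuous function at its minimal value, hence closed and convex, i.e. a bounded closed interval.

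For (ii) I would use the characterization $y\in B_X\iff 0\in\partial g(y)$ together with the identity $\partial_y E_\mathbb{Q}[S(X,\cdot)](y)=\big[E_\mathbb{Q}[\tfrac{\partial^- S(X,x)}{\partial x}(y)],E_\mathbb{Q}[\tfrac{\partial^+ S(X,x)}{\partial x}(y)]\big]$, where the interchange of one-sided differentiation and expectation is justified by monotone convergence applied to the monotone difference quotients of the convex integrand. The easy implication is immediate: if $\mathbb{Q}^{*}$ maximizes $E_\mathbb{Q}[S(X,y)]$ and the first order condition holds, then $y$ minimizes $E_{\mathbb{Q}^{*}}[S(X,\cdot)]$, and since $g\ge E_{\mathbb{Q}^{*}}[S(X,\cdot)]$ everywhere with equality at $y$, $y$ also minimizes $g$. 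The converse is the main obstacle, since from $0\in\partial g(y)$ one must manufacture a \emph{single} maximizing measure for which the condition holds; I would obtain it from a saddle point. Writing $\Phi(y,\mathbb{Q})=E_\mathbb{Q}[S(X,y)]$, which is convex in $y$ and linear (hence concave and weak$^{*}$ continuous) in $\mathbb{Q}$ over the convex, weak$^{*}$ compact set $\mathcal{Q}_\rho\subseteq L^\infty$ (compactness coming from the attainment of the maximum in \Cref{the:dual}), Sion's minimax theorem gives $\min_y\max_\mathbb{Q}\Phi=\max_\mathbb{Q}\min_y\Phi$ with both outer extrema attained. Pairing the given minimizer $y$ with an outer maximizer $\mathbb{Q}^{*}$ produces a saddle point, so $\mathbb{Q}^{*}$ maximizes $E_\mathbb{Q}[S(X,y)]$ and $y$ minimizes $E_{\mathbb{Q}^{*}}[S(X,\cdot)]$, i.e. $0\in\partial_y E_{\mathbb{Q}^{*}}[S(X,\cdot)](y)$, which is exactly the stated first order condition. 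The delicate points are verifying the weak$^{*}$ compactness and continuity needed for the minimax step and justifying differentiation under the expectation for the one-sided derivatives.

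Finally, (iii) is a direct corollary: if $y\mapsto g(y)$ is differentiable with strictly increasing derivative it is strictly convex, hence admits at most one minimizer, and since $B_X\neq\emptyset$ by (i), $B_X$ is a singleton.
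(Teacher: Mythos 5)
Your proof is correct, but on part (ii) it takes a genuinely different route from the paper's. For (i) and (iii) the substance matches: where the paper merely asserts that $f(y)=\rho(-S(X,y))$ is proper and level bounded, you actually derive coercivity ($f_S$ convex, nonnegative, and vanishing only at $0$ forces $f_S(t)\to\infty$ as $|t|\to\infty$, and Loadedness plus Fatou transfers this to $f$), which fills in a step the paper leaves implicit; your (iii) via strict convexity is the paper's argument in different words. The divergence is in (ii): the paper differentiates the value function $y\mapsto\max_{\mathbb{Q}\in\mathcal{Q}_\rho}E_\mathbb{Q}[S(X,y)]$ directly and asserts, via ``dominated convergence'', that its one-sided derivatives are $E_{\mathbb{Q}^{*}}[\partial^{\pm}S(X,\cdot)(y)]$ for a maximizer $\mathbb{Q}^{*}$ at $y$. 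That is really an envelope (Danskin-type) step---the maximizer moves with $y$, and dominated convergence alone only justifies the interchange for a \emph{fixed} measure---so the paper is terse exactly where the difficulty sits. You sidestep differentiating the max function altogether: Sion's minimax theorem plus attainment gives a saddle point $(y,\mathbb{Q}^{*})$, so that $\mathbb{Q}^{*}$ maximizes $E_\mathbb{Q}[S(X,y)]$ while $y$ minimizes $E_{\mathbb{Q}^{*}}[S(X,\cdot)]$, and the first order condition is then the subdifferential characterization for a fixed measure, where the interchange of one-sided derivatives and expectation (monotone difference quotients sandwiched between integrable bounds) is unproblematic. What this buys is rigor at the delicate point; what it costs is verifying weak$^{*}$ compactness of $\mathcal{Q}_\rho$ (available because $\rho$ is finite on $L^1$, so its dual set is a weak$^{*}$ compact subset of $L^\infty$, consistent with the attained maximum in \Cref{the:dual}) and attainment of the inner minima (which follows from the same coercivity as in (i)). Your route is also consonant with the paper itself, which invokes Sion's theorem later in the proof of \Cref{prp:sol}(i). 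One caveat applying equally to both proofs: when the argmax over $\mathcal{Q}_\rho$ at $y$ is not unique, what is actually established is the first order condition for \emph{some} maximizing measure, which is the sensible reading of the statement.
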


\begin{proof}
	For (i), fix $X\in L^p$ and let $f:  \R \rightarrow \R$ be defined as  $f (x):=  \rho\left( -S(X,x)\right) $. Clearly, $f$ is finite, convex, and, hence, a continuous function.  Note that  $f$ is  proper and level bounded. Thus,  $\inf_{x\in\mathbb{R}}  f(x)$ is finite and the set $\argmin_{x\in\mathbb{R}}  f(x)$ is non-empty and compact. Moreover, since $f$ is convex, $B_X$ is an interval. 
	
	Regarding (ii),	since $ f$ is convex, we have that $y\in\mathbb{R}$ is a minimizer if and only if
	\[0\in\left[\dfrac{\partial^-  f(x)}{\partial x}(y),\dfrac{\partial^+  f(x)}{\partial x}(y)\right].\] 
	Dominated convergence yields  \[
	\dfrac{\partial^- f(x)}{\partial x}(y)=E_{\mathbb{Q}^{*}}\left[\dfrac{\partial^{-}S(X,x)}{\partial x}(y) \right]\:\text{and}\:\dfrac{\partial^+ f(x)}{\partial x}(y)=E_{\mathbb{Q}^{*}}\left[\dfrac{\partial^{+}S(X,x)}{\partial x}(y)\right],\]
	where $\mathbb{Q}^* \in \mathcal{Q}_\rho$ such that $\rho(-S(X,y)) = E_{\mathbb{Q}^*} [S(X,y)]$.
	
	For (iii) the f.o.c. becomes \[E_{\mathbb{Q}^{*}}\left[\dfrac{\partial S(X,x)}{\partial x}(y)\right]=\dfrac{\partial \rho(-S(X,x))}{\partial x}(X,y)=0.\] Then, the strictly increasing behavior assures the minimizer is unique.
\end{proof}

We now explore the main properties of our risk and deviation measures.

\begin{Prp}\label{prp:sol2}
	Let $R$ and $B_X$ be as in \Cref{def:meas}. Then:
	\begin{enumerate}
		\item $R$ is monetary and $\mathcal{A}_{R}= \left\lbrace X \in L^p:  y\not\in B_X \; \forall \; y<0\right\rbrace$.
		\item  if  $y\to \rho(- S(X,y))$ is, for any $ X\in L^p$, differentiable with strictly increasing derivative and $\frac{\partial f_S(x)}{\partial x}$ is convex, then $R$ fulfills Convexity. In this case, $R$ is a lower semi-continuous in the $L^p$ norm and continuous in the bounded $\mathbb{P}$-a.s. convergence (Lebesgue continuous).
		\item if $f_S$ is positive homogeneous, then $R$ fulfills Positive Homogeneity.
		\item  $R(X)\in[\essinf X,\esssup X]$.
	\end{enumerate}
	
\end{Prp}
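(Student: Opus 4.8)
The plan is to establish the four claims about $R$ by combining the structural facts about $B_X$ from \Cref{prp:prob} with the dual representation of the coherent risk measure $\rho$ and the properties of $f_S$ recorded in the preliminary remarks. Throughout I write $f(y):=\rho(-S(X,y))$, which by \Cref{prp:prob} is finite, convex, continuous and level-bounded, so $B_X=\argmin_y f(y)$ is a nonempty compact interval and $R(X)=-\min B_X$ is well defined and finite.

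For part (i), I would verify Monotonicity and Translation Invariance directly. Translation Invariance is the cleanest: using property (ii) of the score, $S(X+c,y)=f_S(X+c-y)=S(X,y-c)$, so $f_{X+c}(y)=f_X(y-c)$; hence $B_{X+c}=B_X+c$ and therefore $R(X+c)=-\min B_{X+c}=-\min B_X-c=R(X)-c$. For Monotonicity, suppose $X\le Y$. Here the work is to show $\min B_X\le \min B_Y$, i.e.\ that raising the position shifts the minimizing interval rightward. I would argue via the first-order condition in \Cref{prp:prob}(ii): since $x\mapsto S(x,y)$ is, by the second Remark, non-increasing for $x>y$ and non-decreasing for $x<y$, increasing $X$ decreases $S(X,y)$ for $y$ below the position and pushes the optimal $y$ upward; making this rigorous through the monotone behaviour of the left/right derivatives of $f_S$ is the delicate point. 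The acceptance-set identity $\mathcal{A}_R=\{X:\,y\notin B_X\ \forall y<0\}$ then follows by unwinding definitions: $R(X)\le 0$ iff $-\min B_X\le 0$ iff $\min B_X\ge 0$ iff $B_X$ contains no negative point.

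For part (ii), the key is to show that under the stated hypotheses $y\mapsto f(y)$ is differentiable with a strictly increasing derivative, so by \Cref{prp:prob}(iii) $B_X=\{y^*(X)\}$ is a singleton and $R(X)=-y^*(X)$, where $y^*(X)$ is the unique root of $E_{\mathbb{Q}^*}[f_S'(X-y)]=0$. To obtain Convexity of $R$, fix $X_0,X_1$, $\lambda\in[0,1]$, set $X_\lambda=\lambda X_0+(1-\lambda)X_1$, and let $y_i^*=y^*(X_i)$. The plan is to test the point $\bar y=\lambda y_0^*+(1-\lambda)y_1^*$ against the first-order condition for $X_\lambda$: I would show $E_{\mathbb{Q}^*}[f_S'(X_\lambda-\bar y)]\ge 0$ using convexity of $f_S'$ (so that $f_S'(\lambda(X_0-y_0^*)+(1-\lambda)(X_1-y_1^*))\le \lambda f_S'(X_0-y_0^*)+(1-\lambda)f_S'(X_1-y_1^*)$ after handling signs) together with the optimality conditions at $y_0^*,y_1^*$ and the sub-linearity of $\rho$; this forces $y^*(X_\lambda)\ge \bar y$, hence $R(X_\lambda)=-y^*(X_\lambda)\le -\bar y=\lambda R(X_0)+(1-\lambda)R(X_1)$. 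This sign-bookkeeping with $f_S'$ and the worst-case measure $\mathbb{Q}^*$ is the main obstacle of the whole proposition. Once Convexity holds, $R$ is a finite convex monetary map on $L^p$, and the continuity statements follow from the same $L^p$-/Lebesgue-continuity argument as in \Cref{the:dual}.

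For part (iii), if $f_S$ is positively homogeneous then $S(\lambda X,\lambda y)=f_S(\lambda(X-y))=\lambda f_S(X-y)=\lambda S(X,y)$ for $\lambda\ge0$, so by positive homogeneity of $\rho$, $f_{\lambda X}(\lambda y)=\lambda f_X(y)$; this gives $B_{\lambda X}=\lambda B_X$ and hence $R(\lambda X)=\lambda R(X)$, with the case $\lambda=0$ handled separately since $R(0)=0$. Finally, for part (iv) I would use the first-order characterization to sandwich $R(X)$. If $y<\essinf X$ then $X-y>0$ a.s., so $f_S'$ (being non-increasing on the positive axis with a sign that makes the right-derivative of $f$ strictly negative) rules out optimality there, giving $\min B_X\ge\essinf X$; symmetrically, if $y>\esssup X$ then $X-y<0$ a.s.\ and the left-derivative of $f$ is strictly positive, so $\max B_X\le\esssup X$. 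Combining, $\min B_X\in[\essinf X,\esssup X]$, whence $-R(X)=\min B_X$ lies in that interval and thus $R(X)\in[\essinf X,\esssup X]$ after accounting for the reflection; I would double-check the sign convention so the stated two-sided bound comes out correctly.
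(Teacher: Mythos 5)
Your handling of items (i), (iii) and (iv) follows essentially the paper's own route: Translation Invariance from $S(X+c,y)=S(X,y-c)$; Monotonicity via the first-order condition and the monotone behaviour of the one-sided derivatives of $y\mapsto\rho(-S(X,y))$ (this is exactly the paper's argument with the maps $g,h$, which you correctly identify but leave as ``the delicate point''); Positive Homogeneity from $\rho(-f_S(\lambda X-\lambda y))=\lambda\rho(-f_S(X-y))$; and (iv) by excluding minimizers outside $[\essinf X,\esssup X]$. On (iv), your instinct to double-check signs is right: what actually follows (from your argument and from the paper's) is $-R(X)=\min B_X\in[\essinf X,\esssup X]$, i.e.\ $R(X)\in[-\esssup X,-\essinf X]$; the bound as literally stated already fails for constants, since $R(c)=-c$.

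The genuine gap is in item (ii). You propose to test $\bar y=\lambda y_0^*+(1-\lambda)y_1^*$ in the first-order condition for $X_\lambda$ and to deduce $E_{\mathbb{Q}^*}[f_S'(X_\lambda-\bar y)]\ge 0$ from convexity of $f_S'$. But convexity of $f_S'$ gives the \emph{upper} bound
\[
f_S'(X_\lambda-\bar y)\le \lambda f_S'(X_0-y_0^*)+(1-\lambda)f_S'(X_1-y_1^*),
\]
and combined with the optimality conditions $E_{\mathbb{Q}_i^*}[f_S'(X_i-y_i^*)]=0$ this can only produce $E[f_S'(X_\lambda-\bar y)]\le 0$ (take $\rho=E$ so the measures coincide). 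Since the derivative of $y\mapsto\rho(-S(X_\lambda,y))$ is $-E[f_S'(X_\lambda-y)]$ and is strictly increasing, this forces $y^*(X_\lambda)\le\bar y$, hence $R(X_\lambda)\ge\lambda R(X_0)+(1-\lambda)R(X_1)$: your chain of inequalities proves \emph{concavity}, the opposite of the claim. This is not a repairable bookkeeping slip: with $\rho=E$ and $f_S(x)=e^{x}-x-1$ (so $f_S'$ is convex and every hypothesis of (ii) holds) one computes $R(X)=-\log E[e^{X}]$, which is concave and not convex; by contrast the LINEX score $f_S(x)=e^{-x}+x-1$, whose derivative is \emph{concave}, yields the convex entropic risk measure. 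So your route delivers Convexity only if ``$f_S'$ convex'' is replaced by ``$f_S'$ concave''. (The paper's own proof works instead with $g(X,x)=\frac{\partial}{\partial y}\rho(-f_S(X-y))\big|_{y=x}$, asserting joint convexity of $g$ and evaluating the f.o.c.\ at $R(X)$ rather than at the minimizer $-R(X)$; it suffers from the same sign difficulty, so you are in good company, but a blind proof of the stated claim does not close this way.) There is also a second, independent hole for general coherent $\rho$: the optimality conditions for $X_0,X_1$ hold under their own maximizing measures $\mathbb{Q}_0^*,\mathbb{Q}_1^*$, while your test of $\bar y$ needs expectations under the maximizer $\mathbb{Q}_\lambda^*$ associated with $(X_\lambda,\bar y)$; invoking ``sub-linearity of $\rho$'' does not transfer the equalities $E_{\mathbb{Q}_i^*}[f_S'(X_i-y_i^*)]=0$ into bounds under $\mathbb{Q}_\lambda^*$, and bridging this mismatch is precisely what the paper's (unproved) joint-convexity claim for $g$ is meant to shortcut.
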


\begin{proof}
	Regarding (i), Translation Invariance is straightforward since $S(x,y)=f_S(x-y)$ with $f_S(0)=0$. For Monotonicity, let $g,h\colon L^p\times\mathbb{R}\rightarrow\mathbb{R}$ be as  
	\[g(X,x)=\frac{\partial^{-}\rho\left( -S(X,y)\right)}{\partial y}(X,x),\:\text{and}\:h(X,x)=\frac{\partial^{+}\rho\left( -S(X,y)\right)}{\partial y}(X,x).\]
	Since $x \mapsto \rho(-S(X,x ) )$ is a convex real function for all $X \in L^p$, the left and right derivatives above are well defined. Furthermore, $g$ and $h$ are non-decreasing in the second argument. Additionally, note that
	\begin{align*}
	g(X,x) = E_{\mathbb{Q}^*} \left[ \frac{\partial^{-} S(X,y) }{\partial y}(X,x) \right].
	\end{align*}
	Hence, we have that $X(\omega) \mapsto S(X(\omega),x) $ is also a convex real function. Therefore,  $X\to g(X,x)$ is non-increasing. Similarly for $h$.  Now, let $X,Y\in L^p$ with $X\leq Y$. Then $h(Y,x)\leq h(X,x)$ for any $x\in\mathbb{R}$. Furthermore, as $x \mapsto g(X,x)$ is non-decreasing and $g \leq h$, the condition $g(X,x) \leq 0 $ in the following is non-binding in the sense that
	\[ \inf\{x\in\mathbb{R}\colon g(X,x)\leq 0,h(X,x)\geq0\} = \inf\{x\in\mathbb{R}\colon h(X,x)\geq0\}.\]
	Then, we get from the first order condition of  \Cref{prp:prob} that		\begin{align*}
	-R(X) &=\min B_X=\inf\{x\in\mathbb{R}\colon h(X,x)\geq0\},\\
	-R(Y)&=\min B_Y=\inf\{x\in\mathbb{R}\colon h(Y,x)\geq 0\}.
	\end{align*} Note that such expressions are well defined because, from  \Cref{prp:prob}, the argmin set is a closed interval. We then must have $-R(X)\leq -R(Y)$ since $\{x\in\mathbb{R}\colon h(Y,x)\geq 0\}\subseteq \{x\in\mathbb{R}\colon h(X,x)\geq 0\}$. By multiplying both sides by $-1$ we get the claim. Moreover, we then have that
	\[\mathcal{A}_{R}= \left\lbrace X \in L^p:\min B_X\geq 0 \right\rbrace =\left\lbrace X \in L^p: y\in B_X \Rightarrow y \geq 0\right\rbrace=\left\lbrace X \in L^p:  y\not\in B_X, \forall \; y<0\right\rbrace.\]
	
	Concerning (ii), let $f_S^\prime$ be convex. The f.o.c. becomes \[E_{\mathbb{Q}^{*}}\left[f_S^\prime(X-x)\right]=\dfrac{\partial \rho(-f_S(X-x))}{\partial x}(X-x)=0.\]
	Let then $g\colon L^p\times\mathbb{R}\to\mathbb{R}$ be as 
	\[g(X,x)=\frac{\partial\rho\left( -f_S(X-y)\right)}{\partial y}(x),\] which is convex in its domain and non-increasing in $x$ for any $X\in L^p$. Let $\lambda\in[0,1]$ and $X,Y\in L^p$. Then we have 
	\begin{align*}
	g \left( \lambda X+(1-\lambda)Y,\lambda R(X)+(1-\lambda)R(Y) \right) 
	\leq  \lambda g(X,R(X))+(1-\lambda)g(Y,R(Y))=0.
	\end{align*} 
Furthermore, $0= g(\lambda X+(1-\lambda)Y,R(\lambda X+(1-\lambda)Y)).$ This yields 
\[g \left( \lambda X+(1-\lambda)Y,\lambda R(X)+(1-\lambda)R(Y) \right)  \leq g(\lambda X+(1-\lambda)Y,R(\lambda X+(1-\lambda)Y)).\]	
	Thus, due to its non-increasing behavior in the second argument, we obtain $R(\lambda X+(1-\lambda)Y)\leq\lambda R(X)+(1-\lambda)R(Y))$. In this case, $R$ is a convex risk measure. The continuity properties are then directly obtained from \Cref{the:dual}, in fact, from Theorems 2.11, and 3.1 of \cite{Kaina2009}.
	
	Regarding (iii), the result follows immediately since  for any $X\in L^p$, any $y\in\mathbb{R}$ and $\lambda\geq0$ \begin{align*}
	R(\lambda X) &=
	-\min\left\lbrace\argmin\limits_{y\in\mathbb{R}} \rho\left( -f_S(\lambda X - y)\right) \right\rbrace \\
	&=  -\min\left\lbrace\argmin\limits_{y\in\mathbb{R}} \lambda \rho\left( -f_S(X - \frac{y}{\lambda} )\right) \right\rbrace \\ &=
	- \lambda \min\left\lbrace\argmin\limits_{y\in\mathbb{R}} \rho\left( -f_S( X - y)\right) \right\rbrace = \lambda R(X).
	\end{align*}

	For (iv), for $X \in L^\infty$ note that $f_S(X-y)\geq f_S(X-\operatorname{ess}\inf X)$ for $y<\operatorname{ess}\inf X$, and $f_S(X-y)\geq f_S(X-\operatorname{ess}\sup X)$ for $y\geq\operatorname{ess}\sup X$. Additionally, when $\esssup X = \infty$ or $\essinf X = -\infty$, in other words, when $X \in L^p \setminus L^\infty$,  the condition becomes trivial as $R(X)$ is finite. Thus, the argmin must be in the interval. 
\end{proof}

\begin{Rmk}
	Under the conditions of the items (ii) and (iii), we have by  \Cref{the:dual}  the following dual representation:
	\[R(X)=\max\limits_{\mathbb{Q}\in\mathcal {Q}_R}E_\mathbb{Q}[-X],\:\forall\:X\in L^p,\] 
	where \begin{align*}
	\mathcal {Q}_R&=\left\lbrace \mathbb{Q}\in\mathcal{Q}\colon E_\mathbb{Q}[-X]\leq R(X),\:\forall\:X\in L^p\right\rbrace\\
	&=\left\lbrace \mathbb{Q}\in\mathcal{Q}\colon E_\mathbb{Q}[X]\geq \min B_X,\:\forall\:X\in L^p\right\rbrace
	\\ &=\bigcap_{X \in L^p}\left\lbrace \mathbb{Q}\in\mathcal{Q}\colon E_\mathbb{Q}[X]\geq \min B_X\right\rbrace. 
	\end{align*}
\end{Rmk}

We now characterize the minimum $D_{\rho,S}$ as a deviation measure. In this sense, we first formally define deviation measures.

\begin{Def}\label{def:dev}
	A functional $\mathcal{D}:L^p\rightarrow\mathbb{R}_+$ is a deviation measure. It may fulfill the following properties:
	\begin{enumerate}
		\item Non-Negativity: $\mathcal{D}(X)=0$ for $X \in \R$ and $\mathcal{D}(X)>0$ for
		 $X\in L^p \setminus \R$;
		\item Translation Insensitivity: $\mathcal{D}(X+c)=\mathcal{D}(X),\:\forall \:X\in L^p,\:\forall\:c \in\mathbb{R}$;
		\item Convexity: $\mathcal{D}(\lambda X+(1-\lambda)Y)\leq \lambda \mathcal{D}(X)+(1-\lambda)\mathcal{D}(Y),\:\forall\: X,Y\in L^p,\:\forall\:\lambda\in[0,1]$;
		\item Positive Homogeneity: $\mathcal{D}(\lambda X)=\lambda \mathcal{D}(X),\:\forall\:X\in L^p,\:\forall\:\lambda \geq 0$;
	\end{enumerate}
	A deviation measure $\mathcal{D}$ is called convex if it fulfills (i), (ii), and (iii); generalized (also called coherent) if it is convex and fulfills (iv).
\end{Def}

\begin{Prp}\label{prp:sol}
	Let $D$ be defined as in \Cref{def:meas}. Then it has the following properties:
	\begin{enumerate}	
		\item $D$ is a convex deviation. Moreover, $D(X)=\sup\limits_{\mathbb{Q}\in\mathcal{Q}_\rho}\min\limits_{y\in\mathbb{R}}E_\mathbb{Q}[S(X,y)],\:\forall\:X\in L^p$.
		\item if $f_S$ is positive homogeneous, then $D$ fulfills Positive Homogeneity. 
		\item if $f_S$ is sub-additive and $f_S(X)\leq\lVert X\rVert_p\:\forall\:X\in L^p$, then $D$ lower semi-continuous in the $L^p$ norm.
		\item $D$ is continuous in the bounded $\mathbb{P}$-a.s. convergence (Lebesgue continuous).
		\item if $\rho_1\geq \rho_2$ ($S_1\geq S_2$), then $D_{\rho_1,S}\geq D_{\rho_2,S}$ ($D_{\rho,S_1}\geq D_{\rho,S_2}$).
	\end{enumerate}
\end{Prp}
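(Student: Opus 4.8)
The plan is to verify each of the five items in turn, leaning throughout on the dual representation $\rho(-S(X,y))=\max_{\mathbb{Q}\in\mathcal{Q}_\rho}E_\mathbb{Q}[S(X,y)]$ from \Cref{the:dual}, the standing loadedness assumption $\mathbb{P}\in\mathcal{Q}_\rho$, and the structural facts that $B_X$ is a nonempty compact interval (\Cref{prp:prob}). For the deviation axioms hidden in (i), Non-Negativity follows because, choosing any minimizer $y^\ast\in B_X$, loadedness gives $D(X)=\rho(-S(X,y^\ast))\ge E[S(X,y^\ast)]\ge0$, and this is strict when $X$ is non-constant, since then $X\ne y^\ast$ on a set of positive measure so that $f_S(X-y^\ast)>0$ there by property (i) of $S$; for constant $X=c$ the admissible choice $y=c$ yields $\rho(0)=0$. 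Translation Insensitivity is immediate from $S(X+c,y)=f_S(X+c-y)$ and the substitution $y\mapsto y-c$ in the minimization. Convexity is the substantive axiom: taking minimizers $y_X,y_Y$ for $X,Y$ and testing $y=\lambda y_X+(1-\lambda)y_Y$, convexity of $f_S$ gives the pointwise inequality $-S(\lambda X+(1-\lambda)Y,y)\ge\lambda(-S(X,y_X))+(1-\lambda)(-S(Y,y_Y))$, and applying monotonicity and then convexity of $\rho$ turns it into $D(\lambda X+(1-\lambda)Y)\le\lambda D(X)+(1-\lambda)D(Y)$.

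The minimax identity $D(X)=\sup_{\mathbb{Q}}\min_y E_\mathbb{Q}[S(X,y)]$ is where the real work lies, and I expect it to be the main obstacle. Weak duality already gives $\sup_{\mathbb{Q}}\min_y\le\min_y\max_{\mathbb{Q}}=D(X)$ for free. For the reverse inequality I would invoke Sion's minimax theorem applied to $\phi(y,\mathbb{Q})=E_\mathbb{Q}[f_S(X-y)]$, which is convex and continuous in $y$ and affine in $\mathbb{Q}$ over the convex set $\mathcal{Q}_\rho$; since the inner minimum is attained on the compact interval $B_X$ one can restrict $y$ to a compact interval, and $\mathcal{Q}_\rho$ carries the weak (respectively weak$^\ast$ when $p=1$) compactness appropriate to a coherent dual set on $L^p$. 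The delicate points are precisely verifying these topological hypotheses, namely compactness of $\mathcal{Q}_\rho$ and upper semicontinuity of $\mathbb{Q}\mapsto E_\mathbb{Q}[S(X,y)]$ in that topology, together with reconciling the compact-interval restriction of $y$ with the unconstrained inner minimum.

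Positive Homogeneity (ii) mirrors \Cref{prp:sol2}(iii): for $\lambda>0$, writing $f_S(\lambda X-y)=\lambda f_S(X-y/\lambda)$ and using positive homogeneity of $\rho$, the substitution $y\mapsto y/\lambda$ yields $D(\lambda X)=\lambda D(X)$, the case $\lambda=0$ being trivial. Property (v) is the easiest: if $\rho_1\ge\rho_2$ pointwise, then $\rho_1(-S(X,y))\ge\rho_2(-S(X,y))$ for every $y$, and an inequality between objective functions valid for all $y$ is inherited by their minima; if instead $S_1\ge S_2$, then $-S_1(X,y)\le-S_2(X,y)$ and the (decreasing) monotonicity of $\rho$ reverses this to $\rho(-S_1(X,y))\ge\rho(-S_2(X,y))$, again passing to the minima.

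For lower semicontinuity (iii), take $X_n\to X$ in $L^p$ and a minimizer $y_n^\ast$ of $X_n$; sub-optimality of $y_n^\ast$ for $X$ gives $D(X)-D(X_n)\le\rho(-S(X,y_n^\ast))-\rho(-S(X_n,y_n^\ast))$, and bounding the first term by an expectation under its optimal $\mathbb{Q}_n$ while the second dominates that same expectation, subadditivity of $f_S$ collapses the difference to $E_{\mathbb{Q}_n}[f_S(X-X_n)]\le\rho(-f_S(X-X_n))\le\lVert X-X_n\rVert_p$, the final step using the growth bound $f_S(\cdot)\le\lVert\cdot\rVert_p$ together with monotonicity and translation invariance of $\rho$; letting $n\to\infty$ gives $D(X)\le\liminf_n D(X_n)$. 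For Lebesgue continuity (iv), the same one-sided estimate with a fixed minimizer of $X$, now combined with the Lebesgue continuity of $\rho$ from \Cref{the:dual} and continuity of $f_S$, yields $\limsup_n D(X_n)\le D(X)$ along bounded a.s.\ convergent sequences; for the matching bound, uniform boundedness confines the minimizers $y_n^\ast$ to a compact interval via \Cref{prp:sol2}(iv), so a convergent subsequence $y_n^\ast\to y^\ast$ together with bounded a.s.\ convergence of $f_S(X_n-y_n^\ast)$ and Lebesgue continuity of $\rho$ gives $\liminf_n D(X_n)\ge\rho(-f_S(X-y^\ast))\ge D(X)$, and a standard subsequence argument promotes this to convergence of the full sequence.
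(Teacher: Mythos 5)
Your proposal is correct and follows essentially the same route as the paper's own proof: loadedness for Non-Negativity, convexity of $f_S$ combined with monotonicity and convexity of $\rho$ for Convexity, Sion's minimax theorem with $y$ restricted to a compact interval for the representation, subadditivity of $f_S$ plus the growth bound for lower semicontinuity, and bounded minimizers with Bolzano--Weierstrass and Lebesgue continuity of $\rho$ for item (iv). The only differences are organizational --- in (iii) you pass through dual optimizers $\mathbb{Q}_n$ where the paper uses sublinearity of $\rho$ after an add-and-subtract step, and in (iv) your $\liminf$ bound uses minimizers of the robust problem directly rather than the per-$\mathbb{Q}$ minimizers --- while the topological caveats you flag for Sion's theorem are left equally implicit in the paper's proof.
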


\begin{proof}
	Regarding (i), Translation insensitivity is direct from $D(X)=\rho(-f_S(X+R(X)))$. For Non-negativity, since $R(c)=-c,\:\forall\:c\in\mathbb{R}$, we have that $D(c)=0$. If $X$ is not a constant, with abuse of notation, we have that $\mathbb{P}(X\neq -R(X))>0$. We then get that $\mathbb{P}(S(X,R(X))>0)>0$, which, together to $S(X,R(X))\geq0$  guarantees that $E[S(X,R(X))] >0$. Hence, $D(X)=\rho(-S(X,R(X)))\geq E[S(X,R(X)) ] > 0$. 
	For convexity, remember that $f_S$ is convex. Then, consider any pair $X,Y\in L^p$ and any $\lambda\in[0,1]$. We then obtain  that
	\begin{align*}
	D(\lambda X+(1-\lambda)Y)&=\min\limits_{\lambda x_1+(1-\lambda)x_2\in\mathbb{R}}\rho\left(-f_S(\lambda X-\lambda R(X)+(1-\lambda)Y-(1-\lambda)R(Y)) \right)\\
	&\leq\min\limits_{x_1,x_2\in\mathbb{R}}\left\lbrace\lambda\rho(-f_S(X-R(X)))+(1-\lambda)\rho(-f_S(Y-R(Y))) \right\rbrace\\
	&=\lambda\min\limits_{x_1\in\mathbb{R}}\rho(-f_S(X-R(X)))+(1-\lambda)\min\limits_{x_2\in\mathbb{R}}\rho(-f_S(Y-R(Y)))\\
	&=\lambda D(X)+(1-\lambda)D(Y).
	\end{align*}	
		The representation  result follows from the Sion's minimax theorem, see \cite{Sion1958}, because the map $(x,\mathbb{Q})\rightarrow E_\mathbb{Q}[S(X,x)]$ has the needed continuity and quasi-convex properties, $\mathcal{Q}_\rho$ is convex. The optimization over $x\in\mathbb{R}$ can be done in the compact interval $B_X$. 
	
		Positive Homogeneity in (ii) is straightforwardly obtained when $f_S$ is positive homogeneous.	
	
	Regarding (iii), let $X_n\to X$. Since $D(X)\leq\rho(-S(X,y)),\:\forall\:y\in\mathbb{R}$, we thus get that $D(X)\leq\liminf\limits_{n\rightarrow\infty}\rho(-f_S(X+R(X_n)))$. We then have that \begin{align*}
	D(X)&\leq \liminf\limits_{n\rightarrow\infty}\rho(-f_S(X+R(X_n)))\\
	&= \liminf\limits_{n\rightarrow\infty}\rho(-f_S(X +X_n - X_n +R(X_n)))\\
	&\leq\liminf\limits_{n\rightarrow\infty}\left\lbrace\rho(-f_S(X-X_n))+\rho(-f_S(X_n+R(X_n))) \right\rbrace\\
	&\leq\liminf\limits_{n\rightarrow\infty}\left\lbrace\lVert X-X_n\rVert_p+\rho(-f_S(X_n+R(X_n))) \right\rbrace\\ 
	&=\liminf\limits_{n\rightarrow\infty}\lVert X-X_n\rVert_p+\liminf\limits_{n\rightarrow\infty}\rho(-f_S(X_n+R(X_n)))=\liminf\limits_{n\rightarrow\infty}D(X_n).
	\end{align*}
	
	For (iv),  for any $x\in\mathbb{R}$ and any $\mathbb{Q}\in\mathcal{Q}_\rho$ by Dominated convergence we have that $\lim\limits_{n\rightarrow\infty}X_n=X,\:\{X_n\}\subset L^\infty$ bounded implies, for any bounded sequence $\{x_n\}\subset\mathbb{R}$ such that $\lim\limits_{n\rightarrow\infty}x_n=x$, in
	\[E_{\mathbb{Q}}[S(X,x)]=\lim\limits_{n\rightarrow\infty}E_{\mathbb{Q}}[S(X_n,x_n)].\]
	Let $\{x^{*}_{\mathbb{Q},n}\}$ be a sequence where each member is from the argmin set, i.e., $x^*_{\mathbb{Q},n} \in \argmin\limits_{y\in\mathbb{R}}E_\mathbb{Q}\left[S(X_n,y) \right]$. Since $\{X_n\}$ is bounded, $\{x^{*}_{\mathbb{Q},n}\}$ also is bounded  for any $\mathbb{Q}\in\mathcal{Q}_\rho$. By the Bolzano-Weierstrass Theorem, we have, by taking a subsequence if needed, that $x^{*}_\mathbb{Q}=\lim\limits_{n\rightarrow\infty}x^{*}_{\mathbb{Q},n}$ is well defined and finite. We then get that \begin{align*}
	D(X)&\leq \sup\limits_{\mathbb{Q}\in\mathcal{Q}_\rho}E_{\mathbb{Q}}[S(X,x^{*}_\mathbb{Q})]\\
	&=\sup\limits_{\mathbb{Q}\in\mathcal{Q}_\rho}\lim\limits_{n\rightarrow\infty}E_{\mathbb{Q}}[S(X_n,x^{*}_{\mathbb{Q},n})]\\
	&\leq\liminf\limits_{n\rightarrow\infty}\sup\limits_{\mathbb{Q}\in\mathcal{Q}_\rho}E_{\mathbb{Q}}[S(X_n,x^{*}_{\mathbb{Q},n})]=\liminf\limits_{n\rightarrow\infty}D(X_n).
	\end{align*}
	Furthermore, if $\{X_n\}$ is bounded, then also is $\{S(X_n,x)\}$ for any $x\in\mathbb{R}$ since we have  $S(X_n,x)\leq\max\{S(M,x),S(-M,x)\}$, where $M$ is the uniform bound. By continuity of $S$ and Lebesgue continuity of $\rho$, we get that
	\begin{align*}
	D(X)&=\min\limits_{x\in\mathbb{R}}\lim\limits_{n\rightarrow\infty}\rho(-S(X_n,x))\geq\limsup\limits_{n\rightarrow\infty}\min\limits_{x\in\mathbb{R}}\rho(-S(X_n,x))=\limsup\limits_{n\rightarrow\infty}D(X_n).
	\end{align*}
	Hence $D(X)=\lim\limits_{n\rightarrow\infty}D(X_n)$.
	
	Finally, (v) is trivial from the Monotonicity of $\rho$, $\rho_1$ and $\rho_2$.
\end{proof}

\begin{Rmk}
	Under the conditions of item (iii), we have by Theorem 1 of  \cite{rockafellar2006generalized} and The Main Theorem of \cite{pflug2006subdifferential}, the following dual representation:
	\[D(X)=\sup\limits_{Z\in\mathcal {Z}_{\mathcal{Q}_\rho}}E[XZ],\:\forall\:X\in L^p,\] 
	where \begin{align*}
	\mathcal{Z}_{\mathcal{Q}^\prime}&=\left\lbrace Z\in L^q:E[Z]=0,E[XZ]\leq D(X),\:\forall\:X\in L^p\right\rbrace\\
	&=\left\lbrace Z\in L^q:E[Z]=0,\exists\:\:\mathbb{Q}\in\mathcal{Q}_\rho\:\text{s.t.}\:E[XZ]\leq E_{\mathbb{Q}}(S(X,x))\:\forall\:x\in\mathbb{R},\:\forall\:X\in L^p\right\rbrace\\
	&=clconv\left( \cup_{\mathbb{Q}\in\mathcal{Q}_\rho}\left\lbrace Z\in L^q:E[Z]=0,\:E[XZ]\leq  E_{\mathbb{Q}}[S(X,x)]\:\forall\:x\in\mathbb{R},\:\forall\:X\in L^p\right\rbrace\right),
	\end{align*} 
where $clconv$ means the closed convex hull.
\end{Rmk}

\begin{Rmk}
Recently, \cite{Castagnoli2021} proposed the class of star-shaped risk measures, which are characterized by the star-shaped property of the generated acceptance set. The reasoning for star-shapedness as a sensible axiomatic requirement is that if a position is acceptable, any scaled reduction is also possible.  This class is in the literature in
\cite{Liebrich2021}, \cite{Moresco2021}, \cite{Herdegen2021} and \cite{Righi2021b}. This property for some functional is defined as  $T(\lambda X) \leq \lambda T (X)$ for $\lambda \in[0,1]$. This  property is implied by convexity under $T(0)\leq 0$. In our framework, $R$ fulfills this property when we replace convexity of  $\frac{\partial f_S(x)}{\partial x}$ by star-shapedness and $\frac{\partial f_S(x)}{\partial x}(0)\leq 0$ in item (ii) of \Cref{prp:sol2}. For $D$, this property is automatically obtained since it is a convex deviation measure.
\end{Rmk}

\section{Examples}  \label{Examples}

In this section, we present a description of possible, but not all, choices for $\rho$ and $S$. The examples described for both quantities can be considered in the practical use of the proposed approach.

\subsection{Risk measures}\label{sub:risk}

In this subsection, we present examples of functionals that can be considered possible choices for $\rho$. However, it is noteworthy that the choices of $\rho$ are not limited to the risk measures presented.

\begin{Exm}
(Expected Loss). \label{ex:EL} Expected Loss (EL) is the most parsimonious coherent risk measure, and it indicates the expected value (mean) of a loss. Thus, EL is a functional $EL: L^1 \rightarrow \mathbb{R}$ defined as
\begin{align*}
EL(X)=E[-X]. 
\end{align*} 
For this measure, the dual set is a singleton $\mathcal{Q}_{EL}=\{\mathbb{P}\},$ that is, it only considers the basic belief. Henceforth, we will omit the subscript $E$ in $R_{E,S}$ and $D_{E,S}$ whenever the risk measure is the expected loss.
\end{Exm}

\begin{Exm}(Mean plus Semi-Deviation). The Mean plus Semi-Deviation (MSD) is a functional $MSD: L^2 \rightarrow \mathbb{R}$ defined by
\begin{align*}
MSD^{\beta}(X)=-E[X] + \beta \sqrt{E[((X-E[X])^{-})^2]}, 
\end{align*}
where $\beta \in [0,1]$. MSD penalizes the EL by the semi-deviation.  The proportion of deviation that has to be considered is given by $\beta$. This measure is studied in detail by \cite{ogryczak1999stochastic} and \cite{fischer2003risk}, and it is a well known law invariant coherent risk measure, which belongs to loss-deviation measures discussed by  \cite{righi2019composition}. The advantages of MSD are its simplicity and financial meaning.
 The dual set of this measure can be represented by $$\mathcal{Q}_{MSD^\beta}=\left \lbrace\mathbb{Q} \in \mathcal{Q} : \frac{d\mathbb{Q}}{d\mathbb{P}} =1+ \beta( V-E[V]), V\geq0, E[|V|^2]=1 \right\rbrace.$$
Despite not being finite for any $X\in L^1$, it is readily useful in our approach if we consider $L^2$ as the domain for $\rho$.
\end{Exm}

\begin{Exm} (Expected Shortfall). \label{ex:VaR} Expected Shortfall (ES) is the canonical example of a coherent risk measure, being the basis of many representation theorems in this field. Nowadays, it is recommended,  together with Value at Risk (VaR), by the Basel Committee as a functional basis for quantifying market risk. The ES is a functional $ES: L^1 \rightarrow \mathbb{R}$ defined as
\begin{align*}
ES^{\alpha}(X)= \frac{1}{\alpha}\int_{0}^\alpha VaR^s(X)ds,
\end{align*}
where $\alpha \in (0,1)$ is the significance level, and     ~$VaR^{\alpha}(X) = -F_X^{-1}(\alpha)$, i.e.,  the maximum expected loss for a given period and significance level. Its acceptance set is $\mathcal{A}_{VaR^\alpha}=\left\lbrace X\in L^1:\mathbb{P}(X<0)\leq\alpha\right\rbrace$. ES represents the expected value of a loss, given it is beyond the  $\alpha$-quantile of interest, i.e., $VaR$. 
We have \[\mathcal{A}_{ES^\alpha}=\left\lbrace X\in L^1:\int_0^\alpha VaR^s(X)ds\leq0\right\rbrace.\]
For ES, the dual set is $$\mathcal{Q}_{ES^\alpha}=\left\lbrace\mathbb{Q}\in\mathcal{Q} : \frac{d\mathbb{Q}}{d\mathbb{P}}\leq\frac{1}{\alpha} \right\rbrace.$$
\end{Exm}

\begin{Exm} (Expectile Value at Risk). \label{ex:EVaR} Expectile Value at Risk  (EVaR) links to the concept of an expectile, which is a generalization of the quantile function used for VaR estimation. 
EVaR is a functional $EVaR: L^1 \rightarrow \mathbb{R}$ directly defined as an argmin of a scoring function, is given by
	\begin{align*}
	EVaR^{\alpha}(X)&=-\argmin\limits_{x\in\mathbb{R}} E[\alpha[(X-x)^+]^2+(1-\alpha)[(X-x)^-]^2].
	\end{align*} 
In accordance with \cite{Bellini2014}, the EVaR is a law invariant coherent risk measure for $\alpha\leq0.5$. In addition, this measure is the only example of elicitable coherent risk measure beyond EL. 
\cite{Bellini2017} points out that according to EVaR, the position is acceptable when the ratio between the expected value of the gain and the loss is sufficiently high.
 In this case, we have \[\mathcal{A}_{EVaR^\alpha}=\left\lbrace X\in L^1\colon\frac{E[X^+]}{E[X^-]}\geq\frac{1-\alpha}{\alpha} \right\rbrace.\]The dual set of EVaR can be given by \[\mathcal{Q}_{EVaR^\alpha}=\left\lbrace\mathbb{Q}\in\mathcal{Q} \colon\:\exists\: a>0,\: a\leq\frac{d\mathbb{Q}}{d\mathbb{P}}\leq a\frac{1-\alpha}{\alpha} \right\rbrace.\] 
\end{Exm}

\begin{Exm}(Maximum Loss). Maximum Loss (ML) is the most extreme coherent risk measure. It is a functional $ML: L^\infty \rightarrow \mathbb{R}$ defined as
\begin{align*}
ML(X)=-\operatorname{ess}\inf X.
\end{align*}
ML leads to more protective situations since  $ML(Y)\geq\rho(Y)$ for any coherent risk measure $\rho$. For this measure, the dual set is given by
$\mathcal{Q}_{ML}=\mathcal{Q},$
i.e., all beliefs are considered. This measure does not directly fit into our framework since the supremum in its dual representation is not necessarily attained because it does not has finiteness assured in any $L^p,\:p\in[1,\infty)$. 
\end{Exm}

\subsection{Score functions}\label{sub:scfunc}

In this subsection, we present some examples of $S$. We describe possible but not limited choices for $S$. We also commented on some functions that do not fit our approach to avoid leaving out important scores.

\begin{Exm} [Squared Error]\label{ex:media}
	The squared error is one of the most common score functions. It is tied to the standard tools of least-squares regression. It is well known that its minimum is the variance, and its minimizer is the expectation. In our setup, we have that:
	\begin{align*}
	S_{EL}(x,y) &= (x-y)^2, \;  f_{S_{EL}} (x) = x^2, \\
	R_{E ,S_{EL}} (X) &=- \min \{ \argmin_{y \in \R}  \E[(X-y)^2] \} = \E[-X],  \\
	D_{E ,S_{EL}}(X) &= \min_{y \in \R} \E [(X-y)^2] = \sigma^2 (X).
	\end{align*}
	It is clear that $S_{EL}$ is a scoring function when $L^p \subseteq L^2$. Furthermore, $y \mapsto S_{EL}(x,y)$ is, for any $x \in \R$, differentiable with strictly increasing derivative and $\frac{\partial f_{S_{EL}}(x)}{\partial x}$ is convex. However, $S_{EL}$ is not positive homogeneous, while the resulting $R_{E ,S_{EL}}$ has this property, the deviation $D_{E ,S_{EL}}$ does not. This also yields that $R_{\rho, S_{EL}}$ and $D_{\rho, S_{EL}}$ are convex risk/deviation measures for any coherent $\rho$.
\end{Exm}

\begin{Exm} [Value at Risk] \label{ExampleVaR} Value at Risk  is a well known monetary risk measure in academia and industry. This measure is elicitable.
 As seen in \cref{ex:VaR}, VaR is defined as the lower quantile. 
Its connection to quantile regression is self-evident. 
	\begin{align}
	S_{VaR^\alpha}  (x,y) &= \alpha(x - y)^+ + (1 - \alpha)(x - y)^-, \; f_{S_{VaR^\alpha}} (x) = \alpha x^+ + (1-\alpha) x^-, \label{score1}
	\\
	R_{S_{VaR^\alpha}} (X) &=- \min \{ \argmin_{y \in \R}  \E[\alpha(X - y)^+ + (1 - \alpha)(X - y)^-] \} = VaR^\alpha (X),  \\
	D_{S_{VaR^\alpha}}(X) &= \min_{y \in \R} \{ \E[\alpha(X - y)^+ + (1 - \alpha)(X - y)^-] \}= ESD^\alpha(X),
	\end{align}
	where $ESD^\alpha$ is the expected shortfall deviation, a generalized deviation measure based on the expected shortfall, defined as:
	\[ ESD^\alpha (X) = ES^\alpha (X - \E[X]) . \]
	$S_{VaR^\alpha}$ is a scoring function when $L^p \subseteq L^1$.  Furthermore, $y \mapsto S_{VaR^\alpha}(x,y)$ is not, for any $x \in \R$ differentiable with strictly increasing derivative for all $y$, in particular, it is not differentiable at $y=x$. This violates the assumption in item (iii) of \cref{prp:prob}, which would yield that the set $B_X$ is a singleton; indeed, it is well known that quantiles are intervals. It  also prevents us from applying an item (ii) of  \Cref{prp:sol2} which would yield convexity, and again, it is well known that VaR is not convex. However, to guarantee that $D_{S_{VaR^\alpha}}$ is convex and positive homogeneous, it is enough for $f_{S_{VaR^\alpha}}$ to be convex and positive homogeneous, which is the case. For different choices of $\rho$, our results guarantee that $R_{\rho, S_{VaR^\alpha}}$ is a positive homogeneous monetary risk measure; and $D_{\rho, S_{VaR^\alpha}}$ is a generalized deviation measure.
	\begin{align*}
	R_{\rho.S_{VaR^\alpha}} (X) &=- \min \{ \argmin_{y \in \R}  \rho(-\alpha(X - y)^+ - (1 - \alpha)(X - y)^-) \},  \\
	D_{\rho,S_{VaR^\alpha}}(X) &= \min_{y \in \R} \{ \rho( -\alpha(X - y)^+ - (1 - \alpha)(X - y)^-) \}.
	\end{align*}
\end{Exm}

\begin{Exm}[Absolute Error]\label{ex:mediana}
	The Median is a special case of the VaR with $\alpha = 0.5$. In such a case, the VaR scoring function degenerates to the Absolute Error. The Median and the Absolute Error  are widely used in evaluating point forecasts. See \cite{gneiting2011making}. 
	\begin{align*}
	AE (x,y) &= | x-y|, \; f_{AE} (x) = |x|, \\
	R_{AE} (X) &= VaR^{\frac{1}{2}} (X) = \text{median} (X), \\
	D_{AE} (X)  &= ESD^{\frac{1}{2}} (X).
	\end{align*}
	Of course, the same characteristics and issues of the VaR scoring function are carried over the Absolute Error. Keeping the Absolute Error as the scoring function, we can change the risk measure inside the argmin  from the expected loss to the Maximum Loss, defined as $ML (X) = \esssup (-X)$. The ML is only well defined in $L^\infty$ and thus not directly fitted into our approach. Nonetheless, in this case,  we could define
	\begin{align*}
	R_{ML,AE} (X) &= - \min \{ \argmin_{y \in \R} \esssup | X-y|\} = \dfrac{1}{2}(\esssup X + \essinf X), \\
	D_{ML,AE} (X) &= - \min \{ \argmin_{y \in \R} \esssup | X-y|\} = \dfrac{1}{2}(\esssup X - \essinf X).
	\end{align*}
	We have that $R_{ML,AE}(X)$ is the center of the range of $X$. It is also worth noting that $R_{ML,AE}=R_{ML,S_{EL}} $. Additionally, $D_{ML,AE}(X)$ is the full range of $X$ and is a generalized deviation measure. 
\end{Exm}

\begin{Exm} [Minimum and Maximum Loss, a non-example]
	The Minimum Loss (MinL) is a very forgiven coherent risk measure, it is defined as $MinL(X) = \essinf -X = - \esssup X$,  it can be seen as a VaR with $\alpha = 1$, note that in this case, while the left quantile is well defined (for $X \in L^\infty$) the right quantile assumes $\infty$.
	\begin{align*}
	S_{MinL} (x,y) &= (x-y)^+, \; f_{S_{MinL}} (x) = x^+, \\
	R_{S_{MinL}} (X) &= MinL(X),
	\\
	D_{S_{MinL}} (X) &= 0.
	\end{align*}
	This example is a clear warning to the importance of using a scoring function and not any seemingly fine function, note that $D_{S_{MinL}}$ is identically $0$ that is because $S_{MinL}$ is not a scoring function, it does not fulfill requirement (i) of Definition 1 as $S_{MinL} (x,y) = 0 $  for all $y \geq x$. It is relevant to highlight that under $S_{MinL}$, $B_X = [\sup X, \infty)$. If we were to allow our scoring function to assume $\infty$ and not be continuous,  we could use the following:
	
	\begin{align*}
	S_{MinL} (x,y) &=  \begin{cases}  y-x \; &\text{ if }  \; x \leq y \\  \infty \; &\text{ if }  \; x > y
	\end{cases}, \\ 
	f_{S_{MinL}} (x) &=  \begin{cases} - x \; &\text{ if }  \; x \leq 0 \\  \infty \; &\text{ if }  \; x > 0
	\end{cases}, \\
	R_{S_{MinL}} (X) &= MinL(X),
	\\
	D_{S_{MinL}} (X) &= \sup X - \E[X].
	\end{align*}
		We can similarly obtain the Maximum Loss.
	\begin{align*}
	S_{ML} (x,y) &=  \begin{cases}  x-y \; &\text{ if }  \; x \geq y, \\  \infty \; &\text{ if }  \; x < y.
	\end{cases}, \\ 
	f_{S_{ML}} (x) &=  \begin{cases} x \; &\text{ if }  \; x \geq 0, \\  \infty \; &\text{ if }  \; x < 0.
	\end{cases}, \\
	R_{S_{ML}} (X) &= ML(X),
	\\
	D_{S_{ML}} (X) &= \E[X] - \inf X.
	\end{align*}
	$D_{S_{MinL}}$ and $D_{S_{ML}}$ are both generalized deviation measures, the former is known as upper range deviation and the latter as lower range deviation. It is worth recalling that both Minimum and Maximum Loss are only well defined risk measures for $L^\infty$.
\end{Exm} 

\begin{Exm} [Huber Loss]
	The Huber loss is a loss function used in robust regression. This is less sensitive to outliers in data than the squared error loss. In a sense, this interpolates between \cref{ex:media} and \ref{ex:mediana}. For a scaling parameter $\beta >0$, the $\beta$-truncation function $T_\beta$ and the Huber loss-like scoring function $HL$ are defined as:
	\begin{align*}
	T_\beta (x) &= \begin{cases}  \beta \; &\text{ if } \; x \geq \beta , \\  x \; &\text{ if } \; |x| \leq \beta ,
	\\ -\beta \; &\text{ if } \; x \leq \beta .
	\end{cases}, \\
	HB(x,y) &= \begin{cases} |x-y| - \dfrac{\beta}{2} \; &\text{ if } |x-y| \geq \beta, \\
	\dfrac{(x-y)^2}{2 \beta} \; &\text{ if } \; |x-y| \leq \beta.
	\end{cases}, \\
	f_{HB} (x) &= \begin{cases} |x| - \dfrac{\beta}{2} \; &\text{ if } |x| \geq \beta, \\
	\dfrac{x^2}{2 \beta} \; &\text{ if } \; |x| \leq \beta.
	\end{cases}, \\
	D_{HB} (X) &=  \E[f_{HB} (X) - R_{HB} (X) ],
	\end{align*}
	where $R_{HB} (X)$ is such that $ \E[T_\beta (X-R_{HB}(X))] =0$. This relationship was highlighted in \cite{Rockafellar2013}. The $f_{HB}$ scoring function is quadratic for small values of $x$ and linear for large values. The Huber Loss function  combines much of the sensitivity of the mean-unbiased, minimum-variance estimator of the mean (using the quadratic loss function) and the robustness of the median-unbiased estimator (using the absolute value function). As $HB$ is a  scoring function such that $y \mapsto HB(x,y)$ is, for any $x \in \R$, differentiable with strictly increasing derivative and $\frac{\partial f_{HB}(x)}{\partial x}$ is convex, $R_{\rho,HB}$ is a convex risk measure and $D_{\rho,HB}$ is a convex deviation measure for any coherent risk measure $\rho$.
	
\end{Exm}

\begin{Exm}[LINEX] 
	The Entropic risk measure (ENT) is a risk measure that depends on the risk aversion of the user through the exponential utility function. It is a prime example of a convex risk measure that is not coherent. This measure is the map $ENT^\gamma\colon L^1\to\mathbb{R}$  defined as \[ENT^\gamma (X) = \frac{1}{\gamma} \log \E [ e^{-\gamma X}]\] for a risk aversion parameter $\gamma >0$. It is associated with the linear-exponential loss function (LINEX). The intuition behind this loss is that it is an asymmetric approximation to the usual quadratic loss function. This is a popular loss function in econometrics.
	\begin{align*}
	S_\gamma (x,y) &=  e^{\gamma (y-x) } - \gamma (y-x) -1  , \\
	f_{S_\gamma} (x) &=  e^{-\gamma x} +\gamma x -1, \\
	R_{S_\gamma} (X) &= ENT^\gamma (X) , \\
	D_{S_\gamma} (X) &= ENT^\gamma(X - \E[X]).
	\end{align*}
	$S_\gamma$ is a  scoring function such that $y \mapsto S_\gamma(x,y)$ is, for any $x \in \R$, differentiable with strictly increasing derivative and $\frac{\partial f_{S_\gamma}(x)}{\partial x}$ is convex. Hence, for any coherent $\rho$, $R_{\rho,S_\gamma}$ is a convex risk measure and $D_{\rho,S_\gamma}$ is a convex deviation measure. For more details on LINEX score functions, see \cite{zellner1986bayesian}.
\end{Exm}

\begin{Exm} [Expectile and Variantile] 
Expectile Value at Risk (see \cref{ex:EVaR})  and Variantile are directly defined as an argmin and a minimum, respectively, for a given scoring function.
EVaR has a score function analog  quadratic form of the VaR score function (see \cref{ExampleVaR}).  The EVaR arises as a solution for asymmetric least squares. Taking $\alpha = 0.5$, we recover traditional least squares, and the EVaR coincides with the Expected Loss (see \cref{ex:EL}).
	\begin{align*}
	S_{EVaR^\alpha} (x,y) &= \alpha [(x-y)^+]^2 + (1-\alpha) [(x-y)^-]^2, \\
	f_{S_{EVaR^\alpha}} (x)  &= \alpha [(x)^+]^2 + (1-\alpha) [(x)^-]^2.
	\end{align*}
	The EVaR is defined as $R_{S_{EVaR^\alpha}}$ and the Variantile is defined as $D_{S_{EVaR^\alpha}}$; the first is a coherent risk measure while the second is a generalized deviation measure
\end{Exm}

\begin{Exm}
	We now present a score which is  a generalization of the Cauchy/Lorentzian,
	Geman-McClure, Welsch/Leclerc, generalized Charbonnier, Charbonnier/pseudo-Huber/L1-L2, and L2 loss functions. This score function was proposed by \cite{barron2019}, and it is defined as 
	\begin{align*}
	f_{S_\alpha} (x) &= \begin{cases} \dfrac{x^2}{2}  &\text{ if } \alpha =2,  \\
	\log \left( \dfrac{x^2}{2} +1 \right) &\text{ if } \alpha = 0 , \\
	1- \exp \left( - \dfrac{x^2}{2} \right) &\text{ if } \alpha = - \infty, \\
	\dfrac{| \alpha - 2| }{\alpha} \left( \left( \dfrac{x^2}{|\alpha - 2 | } +1 \right)^{\frac{\alpha}{2}} -1 \right) &\text{ otherwise. }
	\end{cases} \\
	S_\alpha (x,y) &= f_{S_\alpha}(x-y).
	\end{align*}
	The map $\alpha \mapsto S_\alpha (x,y)$ is continuous in the parameter $\alpha$, which is a shape parameter that controls the robustness of the loss. When $\alpha = 2$ this loss resembles the $S_{EL}$ loss, when $\alpha = 1$ it is a a pseudo Huber loss, see \cite{huber1992}. $\alpha = 0$ yields the Cauchy or Lorentzian loss. $S_{-2}$ is the Geman-McClure loss. Lastly, letting $\alpha = - \infty$ yields the Welch loss (see \cite{dennis1978}). It is convex, hence, a scoring function in our framework for $\alpha \geq 1$. We highlight that $S_\alpha$ is a symmetric loss for all $\alpha$. 
\end{Exm} 

\begin{Exm} [Absolute percentage error and relative error]
	Absolute percentage error (APE) and relative error (RE) are also widely used score functions to assess point forecasts. Both scores are defined in the following way
	\begin{align*}
	APE(x,y) &=   \dfrac{|x-y|}{|y|},\\
	RE(x,y) &=   \dfrac{|x-y|}{|x|}.
	\end{align*} 
	However, neither fits our approach as there is no real function $f \colon \R \rightarrow \R$ such that $APE(x,y) = f(x-y)$ or $RE(x,y)=f(x-y)$.
\end{Exm}

\begin{Exm} [Location of minimum variance squared distance]
	\cite{landsman2022} proposed the Location of minimum variance squared distance (LVS) to measure multivariate risk. LVS is defined as 
	\begin{align*}
	LVS (\textbf{X}) = \argmin_{y \in \R^n} Var (|| \textbf{X}-y||^2),
	\end{align*}
	when $\textbf{X}$ is a vector of $n$ random variables, $|| . ||$ denotes the Euclidean distance, and $Var$ is the variance. For $n=1$,  LVS becomes similar to our approach, i.e., 
	\begin{align*}
	LVS (X) = \argmin_{y \in \R} Var ( (X-y)^2 ) = \argmin_{y \in \R}  { \sigma( (X-y)^2 )},
	\end{align*}
	where $\sigma$ is the standard deviation. This is in a similar spirit to our approach. However, it uses a generalized deviation measure (variance) instead of a coherent risk measure. Therefore, it cannot be used in our approach.
\end{Exm}

\begin{Exm}[Co-elicitability]
	Unfortunately, the ES (see \cref{ex:VaR}) is not elicitable, that is, there is no score function such that the ES is its minimizer. However, as shown by \cite{fissler2016higher} the function $T (X) \mapsto (VaR^\alpha(X), ES^\alpha (X) ) \in \R^2$ is. In this case, the score function has its domain in $\R^3$.
	Considering the family of scoring functions for ES proposed by \cite{fissler2016higher}, \cite{gerlach2017semi} suggest using the following score function to assess ES point forecasts
	\begin{align*}
	S_{ES^\alpha} (x,y,z) &= y(1_{x<y}  - \alpha) - x1_{x<y}  + e^z  \left( z-y + \dfrac{1_{x<y} }{\alpha} (y-x)  \right) - e^z + 1 - \log(1-\alpha),
	\\ (VaR^\alpha (X) , ES^\alpha(X) ) &= \argmin_{ y \in \R^2} \E[ S_{ES^\alpha} (X,y)].
	\end{align*}	
	On the same topic, the Range Value at Risk (RVaR) proposed by \cite{Cont2010}, is defined as  \[ RVaR^{\alpha , \beta} (X) = -\frac{1}{\beta-\alpha}\int_\alpha^\beta F_X^{-1}(s)ds,\:0\leq\alpha\leq\beta\leq 1.\]
Similar to ES, this measure is not elicitable but, $T(X) \mapsto (VaR^\alpha, VaR^\beta , RVaR^{\alpha,\beta} )$ is, as shown by \cite{fissler2021}, under the following score
	\begin{align*}
	&S_{RVaR^{\alpha,\beta}} (x,y,z,w) \\
	 = & y ( 1_{x<y} -\alpha ) -x1_{x<y} + z(1_{x<z} - \beta) - x1_{x<z} -\log(\cosh((\alpha-\beta)w)) - \log(1-\alpha) \\ + &(\beta-\alpha) \tanh((\beta-\alpha)w) \left[ w+\dfrac{1}{\beta-\alpha} \left( S_{VaR^\beta} (x,z) - S_{VaR^\alpha} (x,y) \right) \right] +1.
	\end{align*}
	Clearly, the scores above do not fit our framework, as we limit ourselves to scores with domain in $\R^2$. Nonetheless, it is possible to extend the framework for the domain of $S$ be $\R^{k+1}$, where $k$ is the degree of co-elicitability in the sense of \cite{fissler2016higher}, with the minimization taken on $\mathbb{R}^k$. In this setup, one can define $R$ as the coordinate of interest from the argmin vector. For instance, $ES$ implies $k=2$ and the second coordinate, while for $RVaR$ one gets $k=3$.
	
\end{Exm}

\begin{Exm}[Cost minimization]
\cite{Righi2020} brings forward a similar approach. They propose a robust risk measurement approach that minimizes the expectation of overestimation ($G$) and underestimation ($L$) costs. Their loss function intrinsically depends on the exogenous random variables $G$ and $L$. The study also proposes a deviation measure that is strikingly similar to ours. In fact, if $G$ and $L$ were to be taken as constants, it is equal to $D_{\rho,S_{VaR^\alpha}}$; under some mild conditions and proper choice of dual set, see their Proposition 3. For suitable $G$, $L$ and dual set $\mathcal{Q}^\prime$, their score function, positive homogeneous monetary risk measure, and generalized deviation measure are defined as
\begin{align*}
S_{GL} (x,y) &= (x-y)^+G + (x-y)^-L, \\
GL(X)&=   \sup\limits_{\mathbb{Q}\in\mathcal{Q}^\prime}\left\lbrace -\min\left\lbrace\arg\min\limits_{x\in\mathbb{R}} E_{\mathbb{Q}}[(X-x)^{+}G+(X-x)^{-}L]\right\rbrace\right\rbrace ,
\\
GLD(X)&=\sup\limits_{\mathbb{Q}\in\mathcal{Q}^\prime}\left\lbrace \min\limits_{x\in\mathbb{R}} E_{\mathbb{Q}}[(X-x)^{+}G+(X-x)^{-}L]\right\rbrace =  \min\limits_{x\in\mathbb{R}} \rho((X-x)^{+}G+(X-x)^{-}L).
\end{align*}
$S_{GL}$ is a scoring function in our framework  if $G \in (0,1)$ and $L = 1-G$. In this case, $S_{GL} = S_{VaR^G}$.
\end{Exm}

\section{Conditional risk, linear regression and optimal portfolio weights}\label{CondRisk}

In this section, we define the conditional version of risk and deviation measures, which can be obtained from the connection between regression models and our framework. 
We discuss the properties that conditional risk respects.
Moreover, we formalize a connection between minimum deviation portfolio optimization and regression model considering the score that results from the deviation measure. We also provide a solution for the optimal replication hedging problem based on conditional risk measures obtained by the connection between our approach and linear regression. 

 We now provide a link between linear regression and our framework. 

\begin{Def}\label{def:eli2}
	Let $X\in  (L^p)^n$ and condition (iii) in  \Cref{prp:prob} holds. The conditional risk is a map $R_{\rho,S}\colon L^p\rightarrow L^p$ given by
	\begin{align}
	R_{\rho,S}(Y|X)&=-\left(\mu^{*}+\sum_{i=1}^{n}\beta^{*}_iX_i\right),\nonumber\\
	\text{where}&\;(\mu^{*},\beta_1^{*},\dots,\beta_n^{*})= \argmin\limits_{\mu\in\mathbb{R},\beta\in\mathbb{R}^n}\rho\left(-S\left(Y,\mu+\sum_{i=1}^{n}\beta_iX_i\right) \right).
	\label{eq:eli3}
	\end{align}
\end{Def}

\begin{Rmk}\label{rmk:opt}
	In the case  $\mathcal{Q}=\{\mathbb{P}\}$  and $S=S_{VaR^\alpha}$ or $S=S_{EVaR^\alpha}$, we obtain, respectively, the conditional $\alpha$ quantile and $\alpha$ expectile obtained from quantile regression and expectlie regression. In particular, with $S=S_{EVaR^{0.5}}=S_{EL}$, one recovers the usual ordinary least squares linear regression. We  also define a conditional version of the deviation $D_{\rho,S}\colon L^p\rightarrow L^p$ as
	\[
	D_{\rho,S}(Y|X)=\min\limits_{\mu\in\mathbb{R},\beta\in\mathbb{R}^n}\rho\left(-S\left(Y,\mu+\sum_{i=1}^{n}\beta_iX_i\right) \right)=\rho\left(-S\left(Y,-R_{\rho,S}(Y|X)\right) \right).\] Since it is not directly used in the context we consider, we do not fully explore it due to parsimony.
\end{Rmk}

We now explore properties of the conditional risk similarly as in Propositions \ref{prp:prob} and \ref{prp:sol2}.

\begin{Prp}\label{prp:reg}
	Let $\beta(Y,X)$ be the argmin in \cref{eq:eli3} for $Y\in L^p$, $X=(X_1,\dots,X_n)\in (L^p)^n$. We have the following:
	\begin{enumerate}
		\item  $R_{\rho,S}(Y|X)$ is well defined. 
		\item $\mu^{*}=-R\left(Y-\sum_{i =1}^n\beta^{*}_iX_i\right)$.
		\item if  $Y\leq Z$, then $R(Y|X)\geq R(Z|X)$ for any $Y,Z\in L^p$, and $R(Y+C|X)=R(Y|X)-C$ for any $C\in\mathbb{R}$.
		\item $\beta( Y+CX,X)=\beta(Y,X)+C$ for any $C\in\mathbb{R}^n$. Hence, $R_{\rho,S}(Y+CX|X)=R_{\rho,S}(Y|X)-CX$.
		\item  if $\frac{\partial f_S(x)}{\partial x}$ is convex, then $R(\lambda Y+(1-\lambda)Z|X)\leq \lambda R(Y|X)+(1-\lambda)R(Z|X)$ for any $Y,Z\in L^p$ and any $\lambda \in[0,1]$. 
		\item if $f_S$ is positive homogeneous, then $\beta(\lambda Y,X)=\lambda\beta(Y,X)$ for any $\lambda\geq 0$. Hence, $R_{\rho,S}(\lambda Y|X)=\lambda R_{\rho,S}(Y|X)$.
		\item $\beta( Y,XA)=A^{-1}\beta(Y,X)$ for any $n\times n$ non-singular matrix $A$.	
		\item $R_{\rho,S}(Y|X)=-\left(\mu^{*}+\sum_{i=1}^{n}\beta^{*}_iX_i\right)$ if and only if $Y=\mu^{*}+\sum_{i=1}^{n}\beta^{*}_iX_i+\epsilon$, where $R_{\rho,S}(\epsilon|X)=0$.
		
	\end{enumerate}
\end{Prp}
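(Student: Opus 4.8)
The unifying observation I would build everything on is that the translation structure $S(x,y)=f_S(x-y)$ lets me rewrite the objective in \cref{eq:eli3} as $\rho\left(-S\left(Y-\sum_i\beta_iX_i,\mu\right)\right)$, so the joint minimization over $(\mu,\beta)$ factors into an inner minimization over the intercept $\mu$ (for fixed slopes $\beta$) and an outer minimization over $\beta$. The inner problem is exactly the unconditional one of \Cref{def:meas} applied to the residual $W_\beta:=Y-\sum_i\beta_iX_i$, so its solution (unique, by condition (iii) of \Cref{prp:prob}) is $\mu^*(\beta)=-R(W_\beta)$ with optimal value $D(W_\beta)$. Hence $\beta^*$ minimizes $\beta\mapsto D\big(Y-\sum_i\beta_iX_i\big)$ and
\[
R_{\rho,S}(Y|X)=R\Big(Y-\sum_i\beta^*_iX_i\Big)-\sum_i\beta^*_iX_i .
\]
This single identity, combined with the properties of the unconditional $R$ and $D$ from \Cref{prp:sol2,prp:sol}, is the engine for almost every item, and item (ii) drops out of it immediately.

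For (i) I would separate existence and uniqueness. The map $(\mu,\beta)\mapsto\rho(-S(Y,\mu+\sum_i\beta_iX_i))$ is convex, being the composition of the monotone convex $\rho$ with the pointwise-convex $S(Y,\cdot)$ and the affine map $(\mu,\beta)\mapsto\mu+\sum_i\beta_iX_i$; level-boundedness then follows as in \Cref{prp:prob}(i), yielding a nonempty compact argmin. Uniqueness of the fitted random variable $\mu^*+\sum_i\beta^*_iX_i$ (the only object $R_{\rho,S}(Y|X)$ depends on) I would extract from condition (iii): strict convexity in the intercept direction pins down $\mu^*$ for each $\beta$, and strict convexity of $f_S$ pins down the predictor on the span of $1,X_1,\dots,X_n$. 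The delicate point I would flag is linear dependence among $1,X_1,\dots,X_n$, in which case $(\mu^*,\beta^*)$ need not be unique and I must verify that it is the \emph{predictor}, not the coefficient vector, that is well defined.

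Items (iii)--(iv), (vi), (vii) and (viii) are then substitutions into the reduced objective. Translation invariance in (iii)--(iv) and positive homogeneity in (vi) hold because $S(Y+c,\cdot)$, $S(Y+\sum_iC_iX_i,\cdot)$ and $S(\lambda Y,\cdot)$ reproduce the original objective after the matching shift or scaling of $(\mu,\beta)$, using $f_S(0)=0$, translation insensitivity of $D$, and positive homogeneity of $\rho$ and $f_S$; (vii) follows because a nonsingular $A$ leaves the predictor subspace unchanged, so the unique fit is preserved while its coordinates change by $A^{-1}$; and (viii) is the residual characterization, obtained by writing $\epsilon=Y-\mu^*-\sum_i\beta^*_iX_i$ and noting that optimality of $(\mu^*,\beta^*)$ for $Y$ is equivalent to optimality of $(0,0)$ for $\epsilon$, i.e. $R_{\rho,S}(\epsilon|X)=0$.

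The two items I expect to be genuinely technical are convexity (v) and monotonicity (iii), and here I would mirror the first-order-condition arguments of \Cref{prp:sol2}. For (v), under $f_S'$ convex I would work with the vector of partial right/left derivatives $G(Y,(\mu,\beta))=\nabla_{(\mu,\beta)}\rho(-S(Y,\mu+\sum_i\beta_iX_i))$, which is convex in $Y$ and monotone in $(\mu,\beta)$; evaluating the normal equations $G=0$ at the optimum and exploiting convexity along the segment $\lambda(\mu^*(Y),\beta^*(Y))+(1-\lambda)(\mu^*(Z),\beta^*(Z))$ should give concavity of the fitted predictor, hence convexity of $R_{\rho,S}(\cdot|X)$, exactly as in \Cref{prp:sol2}(ii) (and consistently, for the least-squares projection it holds with equality). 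Monotonicity is the step I would scrutinize most: the unconditional proof reduces $-R$ to a scalar threshold $\inf\{x:h(X,x)\ge0\}$, but here the optimum is fixed by the \emph{system} of normal equations, so one must show the full fitted predictor inherits the ordering from $Y\le Z$. Because the fit is a projection onto the affine span of $X$ rather than onto all of $\sigma(X)$, I anticipate this is the most fragile point and would treat it as the main obstacle, checking carefully whether the claimed almost-sure inequality truly follows from the normal equations or requires an additional hypothesis on $X$.
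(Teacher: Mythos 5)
Your strategy coincides with the paper's own: the inner--outer decomposition (minimize over $\mu$ first, reducing to the unconditional $R$ and $D$ of the residual $Y-\sum_i\beta_iX_i$) is exactly how the paper proves (ii), and your substitution arguments for (iii)--(iv), (vi)--(viii) and the first-order-condition argument for (v) track the paper's proof step by step. What distinguishes your write-up are the two points you flagged but did not resolve, and you should know that both are genuine defects, not excess caution. The monotonicity half of (iii) is not merely fragile: it is false as stated. Take $\Omega=\{\omega_1,\omega_2,\omega_3\}$ with equal weights, $\rho=EL$ and $S=S_{EL}$ (so condition (iii) of \Cref{prp:prob} holds and $R(\cdot|X)$ is minus the ordinary least squares fit, as the paper itself notes in \Cref{rmk:opt}), $X=(0,1,10)$, $Y\equiv 0$ and $Z=(0,0,10)$. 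Then $Y\le Z$ and $R(Y|X)\equiv 0$, but the OLS fit of $Z$ on $X$ has slope $95/91$ and intercept $-45/91<0$, so $R(Z|X)(\omega_1)=45/91>0=R(Y|X)(\omega_1)$, contradicting $R(Y|X)\geq R(Z|X)$. Pointwise order is preserved by projection onto all of $\sigma(X)$ (conditional expectation) but not by projection onto the affine span of $1,X_1,\dots,X_n$, exactly the distinction you drew. The paper's proof conceals this failure: it compares $R\left(Y-\mu^{*}-\sum_i\beta^{*}_iX_i(\omega)\right)$ with $R\left(Z-\mu^{*}-\sum_i\beta^{*}_iX_i(\omega)\right)$ using the \emph{same} $(\mu^{*},\beta^{*})$ for both variables, whereas the optimizer for $Z$ is different; once that substitution is disallowed the argument collapses, and no argument can repair it without extra hypotheses on $X$.

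Your second caveat, linear dependence among $1,X_1,\dots,X_n$, is likewise a real gap in the paper's item (i). If, say, $X_1=X_2$, the objective is constant along $(\beta_1,\beta_2)\mapsto(\beta_1+t,\beta_2-t)$, so the argmin in $\R^{n+1}$ is neither a singleton nor compact, and the paper's ``adapt \Cref{prp:prob} to $\R^{n+1}$'' fails because level-boundedness fails. As you propose, strict convexity still makes the fitted variable $\mu^{*}+\sum_i\beta^{*}_iX_i$, hence $R_{\rho,S}(Y|X)$, unique, so (i) survives if read as a statement about the predictor; but items (iv), (vi) and (vii), which concern the coefficient vector $\beta(Y,X)$ itself, genuinely require affine independence of the regressors (or must be restated as set identities). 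In short: your proposal is faithful to the paper where the paper is correct, and the two obstacles you declined to wave away are precisely the places where the paper's proposition is wrong or incomplete.
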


\begin{proof}
	For (i), we have that $-\left(\mu+\sum_{i=1}^{n}\beta_iX_i\right)\in L^p$ for any $(\mu,\beta_1,\dots,\beta_n)$. Further, defining $f\colon\mathbb{R}^{n+1}\to\mathbb{R}$ as $f(\mu,\beta_1,\dots,\beta_n)=\rho\left(-S\left(Y,\mu+\sum_{i=1}^{n}\beta_iX_i\right) \right)$, the deduction to prove that the argmin is a singleton is similar to the one in  \Cref{prp:prob}, but adapted to $\mathbb{R}^{n+1}$.
	
	Regarding (ii), note that for any $\beta\in\mathbb{R}^{n}$ we have that \[\argmin\limits_{\mu\in\mathbb{R}}\rho\left(-f_S\left(Y-\sum_{i=1}^{n}\beta_i(X_i)-\mu\right) \right)=-R_{\rho,S}\left(Y- \sum_{i=1}^{n}\beta_i(X_i)\right).\] Thus, we obtain that
	\begin{align*}
	&\argmin\limits_{\mu\in\mathbb{R},\beta\in\mathbb{R}^{n}}\rho\left(-f_S\left(Y-\sum_{i=1}^{n}\beta_i(X_i)-\mu\right) \right)\\
	=&\argmin\limits_{\beta\in\mathbb{R}^{n}}\rho\left(-f_S\left(Y-\sum_{i=1}^{n}\beta_i(X_i)+R_{\rho,S}\left(Y- \sum_{i=1}^{n}\beta_i(X_i)\right)\right) \right)\\
	=&\rho\left(-f_S\left(Y-\sum_{i=1}^{n}\beta^{*}_i(X_i)+R_{\rho,S}\left(Y- \sum_{i=1}^{n}\beta^{*}_i(X_i)\right)\right) \right). 
	\end{align*}
	Hence, $\mu^{*}=-R\left(Y-\sum_{i =1}^n\beta^{*}_iX_i\right)$.
	
	Concerning (iii), let $Y\leq Z$. By   \Cref{prp:sol2}, we have $\mathbb{P}-a.s.$ that \[R(Y|X)(\omega)=R\left(Y-\mu^{*}-\sum_{i=1}^n\beta^{*}_iX_i(\omega)\right)\geq R\left(Z-\mu^{*}-\sum_{i=1}^n\beta^{*}_iX_i(\omega)\right)= R(Z|X)(\omega).\] Hence,  $R(Y|X)\geq R(Z|X)$. Further, for any $C\in\mathbb{R}$ we have that \begin{align*}
	&\argmin\limits_{\mu\in\mathbb{R},\beta\in\mathbb{R}^n}\rho\left(-f_S\left( Y+C-\mu-\sum_{i=1}^{n}\beta_iX_i\right)\right)\\
	=&\argmin\limits_{\mu-C\in\mathbb{R},\beta\in\mathbb{R}^n}\rho\left(-f_S\left( Y-\mu+\sum_{i=1}^{n}\beta_iX_i\right) \right)\\
	=&\argmin\limits_{\mu\in\mathbb{R},\beta\in\mathbb{R}^n} \rho\left(-f_S\left( Y-\mu+\sum_{i=1}^{n}\beta_iX_i\right) \right)+(C,0).
	\end{align*}
	Then,  $R(Y+C|X)=R(Y|X)-C$.

	Regarding (iv), for any $C\in\mathbb{R}^n$, we have that 
	\begin{align*}
	&\argmin\limits_{\mu\in\mathbb{R},\beta\in\mathbb{R}^n}\rho\left(-f_S\left( Y+\sum_{i =1}^nC_iX_i-\mu-\sum_{i=1}^{n}\beta_iX_i\right)\right)\\
	=&\argmin\limits_{\mu\in\mathbb{R},\beta\in\mathbb{R}^n}\rho\left(-f_S\left( Y-\mu+\sum_{i=1}^{n}(\beta_i-C_i)X_i\right) \right)\\
	=&\argmin\limits_{\mu\in\mathbb{R},\beta\in\mathbb{R}^n} \rho\left(-f_S\left( Y-\mu+\sum_{i=1}^{n}\beta_iX_i\right) \right)+(0,C)
	\end{align*}
	Thus, $R_{\rho,S}(Y+CX|X)= -\left( \mu^{*}+\sum_{i=1}^{n}(\beta^{*}_i-C_i)X_i\right) =R_{\rho,S}(Y|X)-CX$.
	
	Concerning (v), the claim follows similarly to that in  \Cref{prp:sol2} by considering the f.o.c. \[\dfrac{\partial \rho(-f_S(Y-\sum_{i =1}^n\beta_iX_i-R(Y-\sum_{i =1}^n\beta_iX_i)))}{\partial \beta_i}=0,\:i=1,\dots,n.\]  Let then $g\colon L^p\times (L^p)^n\to\mathbb{R}^n$ be as 
	\[g(Y,X)=\dfrac{\partial \rho(-f_S(Y-\sum_{i =1}^n\beta_iX_i-R(Y-\sum_{i =1}^n\beta_iX_i)))}{\partial \beta_i}(\beta),\:i=1,\dots,n\] which is convex in its domain and non-increasing in $X$ for any $Y\in L^p$. Let $\lambda\in[0,1]$ and $Y,Z\in L^p$. Then we have 
	\begin{align*}
	&g(\lambda Y+(1-\lambda)Z,\lambda R(Y|X)+(1-\lambda)R(Z|X))\\
	\leq& \lambda g(Y,R(Y|X))+(1-\lambda)g(Z,R(Z|X))=0\\
	=&g(\lambda Y+(1-\lambda)Z,R(\lambda Y+(1-\lambda)Z|X)).
	\end{align*} Thus, we obtain $R(\lambda Y+(1-\lambda)Z|X)\leq\lambda R(Y|X)+(1-\lambda)R(Z|X)$.
	
	For (vi), if $\lambda =0$ the result is trivial. Further, we have for any $\lambda> 0$ that \begin{align*}
	&\argmin\limits_{\mu\in\mathbb{R},\beta\in\mathbb{R}^n}\rho\left(-f_S\left(\lambda Y-\mu-\sum_{i=1}^{n}\beta_iX_i\right)\right)\\
	=&\argmin\limits_{\mu\in\mathbb{R},\beta\in\mathbb{R}^n}\lambda \rho\left(-f_S\left( Y-\dfrac{\mu+\sum_{i=1}^{n}\beta_iX_i}{\lambda}\right) \right)\\
	=&\lambda\argmin\limits_{\mu\in\mathbb{R},\beta\in\mathbb{R}^n}\rho\left(-f_S\left( Y-\mu-\sum_{i=1}^{n}\beta_iX_i\right)\right).
	\end{align*}We thus get that
	\[R(\lambda Y|X)=\lambda\left(-\mu^{*}-\sum_{i=1}^n\beta^{*}_iX_i\right)=\lambda R(Y|X).\]
	
	Concerning (vii), we have  for any $n\times n$ non-singular matrix $A$ that 	
	\begin{align*}
	&\argmin\limits_{\mu\in\mathbb{R},\beta\in\mathbb{R}^n}\rho\left(-f_S\left( Y-\mu-\beta X A\right)\right)\\
	=&\argmin\limits_{\mu\in\mathbb{R},\beta\in\mathbb{R}^n} \rho\left(-f_S\left( Y-\mu-(\beta A )X\right) \right)\\
	=&(1,A^{-1})\argmin\limits_{\mu\in\mathbb{R},\beta\in\mathbb{R}^n} \rho\left(-f_S\left( Y-\mu+\sum_{i=1}^{n}\beta_iX_i\right) \right).
	\end{align*}
	
	For (viii), if $Y=\mu^{*}+\sum_{i=1}^{n}\beta^{*}_iX_i+\epsilon$ with $R_{\rho,S}(\epsilon|X)=0$, then it is direct that \[0=R_{\rho,S}(\epsilon|X)=R_{\rho,S}\left(Y-\left(\mu^{*}+\sum_{i=1}^{n}\beta^{*}_iX_i\right)\bigg|X\right)=R_{\rho,S}(Y|X)+\mu^{*}+\sum_{i=1}^{n}\beta^{*}_iX_i.\] Thus, the if part of the claim follows. For the converse, let $\epsilon=Y-\left(\mu^{*}+\sum_{i=1}^{n}\beta^{*}_iX_i\right)$. Then $Y=\mu^{*}+\sum_{i=1}^{n}\beta^{*}_iX_i+\epsilon$ and \[R_{\rho,S}(\epsilon|X)=R_{\rho,S}\left(Y-\left(\mu^{*}+\sum_{i=1}^{n}\beta^{*}_iX_i\right)\bigg|X\right)=R_{\rho,S}(Y|X)+\mu^{*}+\sum_{i=1}^{n}\beta^{*}_iX_i=0.\]
\end{proof}

\begin{Rmk}
Let $\sigma(X)\subseteq\mathcal{F}$ be the sub-sigma-algebra generated by $X$. It is straightforward to verify that the previous Proposition \ref{prp:reg} implies that $R(Y+C|X)=R(Y|X)-C$ for any $C\in L^p(\Omega,\sigma(X),\mathbb{P}])$. Furthermore, if if $f_S$ is positive homogeneous, then $R_{\rho,S}(\lambda Y|X)=\lambda R_{\rho,S}(Y|X)$ or any $\lambda\in L^p_+(\Omega,\sigma(X),\mathbb{P}])$. Thus, we indeed have that $R_{\rho,S}$ is in fact a conditional risk measure in the sense of \cite{Ruszczynski2006}. 
\end{Rmk}

\begin{Rmk}
Based on such framework, one can have metrics in our setup that are similar to the usual coefficient of determination $R^2$ as \[CD_{\rho,S}(Y,X)=1- \dfrac{\rho(-S(Y,-R(Y|X))}{\rho(-S(Y,-R(Y)))},\] where  $Y\in L^p$, $X=(X_1,\dots,X_n)\in (L^p)^n$. Such quantity can be used to summarize the association of $Y$ and $X$. Furthermore, it is also possible to study inference properties of estimated parameters $\beta(Y,X)$ as well as hypothesis tests such as counterparts to the usual $t$ and $F$ tests for OLS approaches. Such topics are outside our current scope and left for future research.
\end{Rmk}

We now formalize the minimum deviation and replication hedging problems to our framework and state a result for our setup that guarantees the existence of a solution and how to obtain it.

\begin{Def}\label{prp:Dev}
	Let $X=(X_1,\dots,X_n)\in (L^p)^n$. The minimum deviation portfolio optimization problem for $X$, $P(X)$, is defined as
	\begin{equation}\label{prp:dev}
	\min_{\substack{w\in\mathbb{R}^n
			\\\sum_{i =1}^nw_i=1}} D_{\rho,S}\left(\sum_{i =1}^nw_iX_i\right)
	\end{equation}
\end{Def}

\begin{Prp}
	We have $w^{*}=(w_1^*,\dots,w_n^{*})\in\argmin P(X)$ if and only if   $R_{\rho,S}(Y|(Y-X_1,\dots,Y-X_{n}))=-\left( \mu^{*}+\sum_{i=1}^{n}w^\prime_i(Y-X_i)\right) $, where $Y=\frac{1}{n}\sum_{i =1}^nX_i$, $w_i^{*}=w_i^\prime+\frac{1}{n}\left( 1-\sum_{i =1}^nw_i^\prime\right)$  and $\mu^{*}=R_{\rho,S}(\sum_{i=1}^nw^{\prime}_i(Y-X_i))$.
\end{Prp}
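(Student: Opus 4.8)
The plan is to show that the constrained portfolio problem $P(X)$ and the unconstrained regression defining $R_{\rho,S}(Y|(Y-X_1,\dots,Y-X_n))$ coincide after an affine reparametrization of the decision vector, so that their solution sets correspond through the stated weight map. First I would record the algebraic backbone: since $Y=\frac1n\sum_j X_j$, the regressors $Z_i:=Y-X_i$ satisfy $\sum_i Z_i=0$, and for every $\beta\in\mathbb{R}^n$ a direct expansion gives
\[
Y-\sum_{i=1}^n\beta_i(Y-X_i)=\sum_{i=1}^n v_i(\beta)\,X_i,\qquad v_i(\beta)=\beta_i+\frac1n\Big(1-\sum_{k=1}^n\beta_k\Big),
\]
with $\sum_i v_i(\beta)=1$. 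Hence $\beta\mapsto v(\beta)$ maps $\mathbb{R}^n$ onto the feasible hyperplane $\{\sum_i w_i=1\}$ (taking a feasible $\beta=w$ returns $v(\beta)=w$), and it is precisely the map $w'\mapsto w^*$ of the statement.

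Next I would reduce the regression objective to a deviation. For fixed $\beta$, minimizing over the intercept and using $D_{\rho,S}(W)=\min_{\mu}\rho(-f_S(W-\mu))$ together with the identity above yields
\[
\min_{\mu\in\mathbb{R}}\rho\Big(-S\big(Y,\,\mu+\sum_i\beta_iZ_i\big)\Big)=D_{\rho,S}\Big(\sum_i v_i(\beta)X_i\Big).
\]
Minimizing over $\beta$ and using surjectivity of $v(\cdot)$ gives the value identity $\min_\beta(\cdots)=\min_{\sum_i w_i=1}D_{\rho,S}(\sum_i w_iX_i)=\min P(X)$. Therefore $w^*\in\argmin P(X)$ if and only if $w^*=v(w')$ for some minimizer $(\mu^*,w')$ of the regression, which is the substance of the equivalence.

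It then remains to identify the intercept and reassemble the conditional-risk representation. Applying item (ii) of Proposition \ref{prp:reg} with regressors $Z_i$ gives $\mu^*=-R\big(Y-\sum_i w_i'Z_i\big)=-R(\sum_i w_i^*X_i)$, after which Definition \ref{def:eli2} produces the displayed expression $R_{\rho,S}(Y|(Y-X_1,\dots,Y-X_n))=-(\mu^*+\sum_i w_i'Z_i)$; the sign bookkeeping in $\mu^*$ is controlled by the Translation Invariance of $R$.

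The hardest part will be the rank-deficiency of the design: because $\sum_iZ_i=0$, the regressors are linearly dependent, condition (iii) of Proposition \ref{prp:prob} fails for the joint problem, and $w'$ is determined only up to an additive constant vector. I would circumvent this by arguing at the level of the fitted value and the induced weights rather than $w'$ itself; the map $v(\cdot)$ annihilates exactly the direction $(1,\dots,1)$ along which $\sum_iw_i'Z_i$ is invariant, so $w^*$, the fitted value, and hence $R_{\rho,S}(Y|(Y-X_1,\dots,Y-X_n))$ are all well defined even when $w'$ is not. The only remaining technical point is attainment of the two minima, which I would secure exactly as in Propositions \ref{prp:prob} and \ref{prp:sol}, via convexity together with the reduction of the inner minimization to the compact interval $B_X$ provided there.
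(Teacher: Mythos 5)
Your proposal follows essentially the same route as the paper's own proof: the reparametrization $v_i(\beta)=\beta_i+\frac{1}{n}\bigl(1-\sum_{k=1}^n\beta_k\bigr)$, the identity $Y-\sum_i\beta_i(Y-X_i)=\sum_i v_i(\beta)X_i$, the reduction of the inner minimization over the intercept to $D_{\rho,S}$ of the induced portfolio, and the surjectivity of $v$ onto the hyperplane $\{\sum_i w_i=1\}$ are precisely the paper's chain of equalities. You go beyond the paper in two respects that are genuine improvements: you address attainment explicitly, and you confront the rank deficiency $\sum_i Z_i=0$, which the paper ignores even though it makes the joint argmin non-unique and thus puts the definition of $R_{\rho,S}(Y|(Y-X_1,\dots,Y-X_n))$ in tension with these regressors (Definition \ref{def:eli2} presupposes condition (iii) of Proposition \ref{prp:prob}, which fails here); your remark that $v$ annihilates exactly the direction $(1,\dots,1)$, so that the fitted value and $w^*$ remain well defined, is the right fix.

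The one step that does not go through as written is the identification of the intercept. From item (ii) of Proposition \ref{prp:reg} you correctly get $\mu^*=-R\bigl(Y-\sum_i w_i' Z_i\bigr)=-R\bigl(\sum_i w_i^* X_i\bigr)$, but the statement asserts $\mu^*=R\bigl(\sum_i w_i'(Y-X_i)\bigr)=R\bigl(Y-\sum_i w_i^* X_i\bigr)$, and Translation Invariance cannot convert one into the other: it applies only to constant shifts, while $Y$ is random. The two formulas genuinely differ; for instance, with $\rho=EL$ and $S=S_{EL}$ the first equals $E\bigl[\sum_i w_i^* X_i\bigr]$ while the second equals $E\bigl[\sum_i w_i^* X_i\bigr]-E[Y]$. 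Note that the paper's own proof performs exactly your substitution (it replaces the optimal intercept inside $f_S$ by $R_{\rho,S}$ of the portfolio, i.e.\ $\mu=-R(\sum_i w_i^* X_i)$) and never verifies the stated formula for $\mu^*$; so the mismatch is a defect of the statement rather than of your argument, but you should flag it as such instead of papering over it with an appeal to Translation Invariance, which is invalid in this setting.
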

\begin{proof}
	We have by Definition that  \begin{align*}
	&R_{\rho,S}(Y|(Y-X_1,\dots,Y-X_{n}))=-\left( \mu^{*}+\sum_{i=1}^{n}w^{*}_i(Y-X_i)\right) \\
	\iff& (\mu^{*},w^{*})=\argmin\limits_{\mu\in\mathbb{R},w\in\mathbb{R}^{n}}\rho\left(-S\left(Y,\mu+\sum_{i=1}^{n}w_i(Y-X_i)\right) \right).
	\end{align*} Note that $w^\prime\in\mathbb{R}^n\iff w_i^{*}=w_i^\prime+\frac{1}{n}\left( 1-\sum_{i =1}^nw_i^\prime\right)\in\mathbb{R}\:\text{and}\:\sum_{i =1}^nw_i^{*}=1$. The equivalence then follows by:
	\begin{align*}
	&\min\limits_{\mu\in\mathbb{R},w\in\mathbb{R}^{n}}\rho\left(-S\left(Y,\mu+\sum_{i=1}^{n}w_i(Y-X_i)\right) \right)\\
	=&\min\limits_{\mu\in\mathbb{R},w\in\mathbb{R}^{n}}\rho\left(-f_S\left(\left(1-\sum_{i=1}^{n}w_i\right)Y+\sum_{i=1}^{n}w_iX_i-\mu\right)\right)\\
	=&\min\limits_{w\in\mathbb{R}^{n}}\rho\left(-f_S\left(\sum_{i=1}^{n}\left( \frac{1}{n}\left(1-\sum_{i=1}^{n}w_i\right)+w_i\right) X_i+R_{\rho,S}\left(\sum_{i=1}^{n}\left( \frac{1}{n}\left(1-\sum_{i=1}^{n}w_i\right)+w_i\right) X_i\right)\right) \right)\\
	=&\min\limits_{\substack{w\in\mathbb{R}^n
			\\\sum_{i =1}^nw_i=1}}\rho\left(-f_S\left(\sum_{i=1}^{n}w_iX_i-R_{\rho,S}\left(\sum_{i=1}^{n}w_iX_i\right)\right)\right)\\
	=&\min_{\substack{w\in\mathbb{R}^n
			\\\sum_{i =1}^nw_i=1}} D_{\rho,S}\left(\sum_{i =1}^nw_iX_i\right).
	\end{align*}
\end{proof}

	


\begin{Def}
	Let $Y\in L^p$ be given and $X=(X_1,\dots,X_n)\in (L^p)^n$. The optimal replication hedging problem for $X$, $H(X)$, is defined as
	\begin{equation}
	\min_{w\in\mathbb{R}^n}\rho\left(-S\left(Y,\mu+\sum_{i=1}^{n}w_iX_i\right) \right)
	\end{equation}
\end{Def}


\begin{Prp}\label{prp:hed}
	$w^{*}=(w_1^*,\dots,w_n^{*})\in\argmin H(X)$ if and only if   $R(Y|X)=-\left( \mu^{*}+\sum_{i=1}^{n}w_i^{*}X_i\right) $, where $\mu^{*}=-R(Y-\sum_{i=1}^nw^{*}_iX_i)$.
\end{Prp}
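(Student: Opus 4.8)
The plan is to observe that the hedging problem $H(X)$ coincides, after the harmless renaming $w \leftrightarrow \beta$, with the very optimization that defines the conditional risk in \Cref{def:eli2}. Both problems minimize $\rho(-S(Y, \mu + \sum_{i=1}^{n} w_i X_i))$ over the intercept and the weights jointly; hence the argmin of one is the argmin of the other, and I expect the entire statement to fall out as a direct corollary of \Cref{def:eli2} together with \Cref{prp:reg}, with no fresh analytic input required.

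First I would invoke \Cref{prp:reg}(i) to record that the problem is well posed: the objective $(\mu,w)\mapsto \rho(-S(Y,\mu+\sum_{i=1}^n w_i X_i))$ is finite, convex, and level-bounded, so its argmin is non-empty and, under condition (iii) of \Cref{prp:prob}, the minimizing pair is unique. For the equivalence I would then argue both directions simultaneously: a pair $(\mu^*,w^*)$ lies in $\argmin H(X)$ if and only if it is exactly the joint minimizer featured in \Cref{def:eli2}, which by the definition of conditional risk is in turn equivalent to $R_{\rho,S}(Y|X) = -(\mu^* + \sum_{i=1}^{n} w_i^* X_i)$. Because the two optimizations are literally the same map, this single identification delivers the ``if'' and the ``only if'' at once.

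It then remains to pin down the intercept, for which I would apply \Cref{prp:reg}(ii) verbatim: it gives $\mu^* = -R(Y - \sum_{i=1}^{n} \beta_i^* X_i)$ for the optimal slope $\beta^*$, and identifying $\beta^* = w^*$ yields the asserted $\mu^* = -R(Y - \sum_{i=1}^{n} w_i^* X_i)$. The main obstacle here is not analytic but notational: one must confirm that the $\mu$ appearing in the definition of $H(X)$ is optimized jointly with $w$, so that $H(X)$ and the conditional-risk problem are genuinely the same optimization, and then transfer the existence and uniqueness already established in \Cref{prp:reg}(i). No estimate is needed beyond those two earlier results.
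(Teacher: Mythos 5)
Your proposal is correct and takes essentially the same route as the paper: both identify $H(X)$ with the joint optimization defining $R_{\rho,S}(Y|X)$ in \Cref{def:eli2}, read the equivalence directly off that definition, and pin the intercept as the optimal $\mu$ given $w^{*}$ (you via \Cref{prp:reg}(ii), the paper by explicitly performing the inner minimization over $\mu$ first, which is the rigorous form of the ``same optimization'' identification you flag as the notational step). Nothing of substance separates the two arguments.
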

\begin{proof}
	We have by Definition that  \begin{align*}
	&R_{\rho,S}(Y|X_1,\dots,X_{n})=-\left( \mu^{*}+\sum_{i=1}^{n}w^{*}_iX_i\right) \\
	\iff& (\mu^{*},w^{*})=\argmin\limits_{\mu\in\mathbb{R},w\in\mathbb{R}^{n}}\rho\left(-S\left(Y,\mu+\sum_{i=1}^{n}w_iX_i\right) \right).
	\end{align*} Further, notice that \[\argmin\limits_{w\in\mathbb{R}^{n}}\rho\left(-S\left(Y,\sum_{i=1}^{n}w_iX_i\right) \right)=\argmin\limits_{w\in\mathbb{R}^{n}}\rho\left(-S\left(Y,k+\sum_{i=1}^{n}w_iX_i\right) \right),\:\forall\:k\in\mathbb{R}.\] We also have that 
	\[\min\limits_{\mu\in\mathbb{R},w\in\mathbb{R}^{n}}\rho\left(-S\left(Y,\sum_{i=1}^{n}w_iX_i\right) \right)
	=\min\limits_{w\in\mathbb{R}^n}\rho\left(-f_S\left(Y-\sum_{i=1}^{n}w_iX_i+R\left(Y-\sum_{i=1}^{n}w_iX_i\right)\right)\right).\]
	From these facts, we get the equivalence between both	$w^{*}=(w_1^*,\dots,w_n^{*})\in\argmin H(X)$ and   $R(Y|X)=-\left( \mu^{*}+\sum_{i=1}^{n}w_i^{*}X_i\right) $.
\end{proof}

	\bibliographystyle{elsarticle-harv}
	\bibliography{Scorereference}

\end{document}